\newtheorem{theorem}{Theorem}[section]
\newtheorem{claim}[theorem]{Claim}
\newtheorem{lemma}[theorem]{Lemma}
\theoremstyle{definition}
\newtheorem{definition}[theorem]{Definition}
\newtheorem{remark}[theorem]{Remark}
\newcommand{\Ga}{\mathcal{P}}
\newcommand{\strat}{\Sigma}
\newcommand{\sequence}{\mathcal{S}}
\definecolor{mygray}{rgb}{0.86, 0.86, 0.86}
\title{\LARGE 4-uniform Maker-Breaker and Maker-Maker games are PSPACE-complete}
\author{\Large Florian Galliot}
\affil{\large Aix-Marseille Université, CNRS, I2M, UMR 7373, 13453 Marseille, France}
\date{}
\begin{document}

\maketitle

\begin{abstract}
    We study two positional games played on hypergraphs, whose edges may be interpreted as winning sets. Two players take turns picking a previously unpicked vertex of the hypergraph. We say a player fills an edge if that player has picked all the vertices of that edge. In the Maker-Maker convention, whoever first fills an edge wins, or we get a draw if no edge is filled. In the Maker-Breaker convention, the first player aims at filling an edge while the second player aims at preventing the first player from filling an edge. Our main result is that, for both games, deciding whether the first player has a winning strategy is a {\sf PSPACE}-complete problem even when restricted to 4-uniform hypergraphs (of bounded maximum degree). For the Maker-Maker convention, this improves on the known {\sf PSPACE}-completeness result for hypergraphs of rank 4. For the Maker-Breaker convention, this improves on the known {\sf PSPACE}-completeness result for 5-uniform hypergraphs, and closes the complexity gap since the problem for hypergraphs of rank 3 is known to be solvable in polynomial time. As a corollary of our construction, we actually get a stronger result: deciding whether the first player has a winning strategy for the vertex-$C_4$-game played on arbitrary graphs, where the winning sets are the vertex sets of 4-cycles, is a {\sf PSPACE}-complete problem for both conventions.
\end{abstract}

\section{Introduction}\label{section1}\strut
\indent\textbf{Positional games.} {\em Positional games} have been introduced by Hales and Jewett \cite{Hales1963}, and later popularized by Erd\H{o}s and Selfridge~\cite{erdos}. The game board is a hypergraph $H$, with vertex set $V(H)$ and edge set $E(H)\subseteq \mathcal{P}(V(H))$. Two players take turns picking a previously unpicked vertex, and the result of the game is defined by one of several possible {\em conventions}. The study of positional games mainly consists, for various conventions and classes of hypergaphs, in trying to determine the {\em outcome} of the game i.e. the result of the game when both players play optimally, or working out the algorithmic complexity of computing the outcome.

\medskip

\indent\textbf{Maker-Maker games.} The {\em Maker-Maker} convention was the first one to be introduced in all generality, by Hales and Jewett in 1963 \cite{Hales1963}: the edges are the winning sets, meaning that whoever first {\em fills} any edge (i.e. picks all the vertices of some edge) wins the game. If no one has filled any edge by the time all vertices are picked, we get a draw. A well-known {\em strategy-stealing} argument \cite{Hales1963} shows that there are only two possible outcomes: a first player win or a draw. Indeed, if the second player SP had a winning strategy, then the first player FP could play an arbitrary first move, proceeding to act as second player and ``steal'' that strategy to win herself. The most famous example of a Maker-Maker game is tic-tac-toe, which can be generalized into the {\em $k$-in-a-row} game on grids of any size \cite{Bec08}. For instance, the $k$-in-a-row game on an infinite grid is known to be a first player win for $k \leq 4$ but a draw for $k \geq 8$ \cite{Zetters}, while the cases $k \in \{5,6,7\}$ are important open problems.

In the Maker-Maker convention, both players must manage offense and defense at the same time, by trying to fill an edge while also preventing the opponent from doing so first. This creates counterintuitive phenomena such as the {\em extra edge paradox} \cite{Bec08}, where adding an edge might turn a first player win into a draw. As such, the Maker-Maker convention is notoriously difficult to handle.

\medskip

\indent\textbf{Maker-Breaker games.} The {\em Maker-Breaker} convention was introduced by Chv\'atal and Erd\H{o}s in 1978 \cite{CE78}. The players are called Maker and Breaker: Maker goes first, aiming at filling an edge, while Breaker's goal is to prevent Maker from filling an edge. Since draws are impossible, there are only two possible outcomes: a Maker win or a Breaker win. Also note that, if Breaker has a winning strategy for the Maker-Breaker game on $H$, then SP can use the same strategy as a drawing strategy for the Maker-Maker game on $H$: as such, compared to the Maker-Maker convention, the Maker-Breaker convention makes things easier for the first player. The board game Hex is the best known example of a Maker-Breaker game \cite{Gardner_Hex}.

This convention is by far the most studied, as it possesses some convenient additional properties compared to the Maker-Maker convention, such as monotonicity of the outcome under taking subhypergraphs. 
Aside from that, the Maker-Breaker convention is also natural because it can be interpreted as a two-player version of the Boolean satisfiability problem (\textsc{SAT}) on positive CNF formulas. Indeed, identifying the hypergraph as a CNF formula where the vertices are variables and the edges are clauses with all positive literals, and Maker's (resp. Breaker's) picks as set to {\sf False} (resp. {\sf True}), we see that Breaker aims at satisfying the formula while Maker aims at falsifying it. This standpoint allows for generalizing the Maker-Breaker game to CNF formulas that need not be monotone: a move then consists in choosing not only a variable but also a truth value for it (indeed, in the absence of monotonicity, this choice is no longer obvious) \cite{UnorderedQBF1,UnorderedQBF2}. This generalized game is a close relative of the quantified Boolean formula problem (\textsc{QBF}), which is a canonical {\PSPACE}-complete problem \cite{QBF}. As such, we may call it \textsc{UnorderedQBF}. The sole difference is that \textsc{QBF} has preordered variables, so that the players only choose the truth values.

\medskip

\indent\textbf{Positional games on graphs.} Historically, Maker-Breaker games were first introduced on graphs, with the players picking edges of a complete graph. Maker's goal is to achieve some substructure, for example a spanning tree or a Hamiltonian cycle \cite{CE78}. In this setting, the size discrepancy between the complete graph and the objective subgraph means Breaker is at a disadvantage. One way to rebalance the game is to give Breaker a bias $b$, meaning he picks edges in bunches of size $b$, and the main object of study then becomes the threshold value of $b$ over which Breaker has a winning strategy (see \cite{Milos} for a survey on such games). An alternative way is to play on arbitrary graphs instead of complete graphs. A famous example is the Shannon switching game \cite{Gardner_Shannon,Lehman}, in which Maker tries to connect two given terminals. There is also the {\em edge-$H$-game}, in which the winning sets are the edge sets of subgraphs isomorphic to some given graph $H$ \cite{Hgame}.

Positional games played on vertex sets of graphs also exist in the literature. It should be noted that one such game is Hex, which has been generalized to larger boards \cite{Reisch} as a special case of the version of the Shannon switching game played on vertices \cite{EvenTarjan}. The Maker-Breaker domination game, in which the winning sets are the dominating sets, has been studied extensively since its introduction in 2020 \cite{domination1} and is a rare occurrence of a positional game on graphs that has also been studied in the Maker-Maker convention \cite{domination2}. Another example is the {\em vertex-$H$-game}, in which the winning sets are the vertex sets of subgraphs isomorphic to some given graph $H$  \cite{vertexHgame}.

\medskip

\indent\textbf{Algorithmic complexity.} The algorithmic complexity of positional games is studied on finite hypergraphs and is usually discussed depending on the size of the edges. A hypergraph is of {\em rank} $k$ if its largest edge has size $k$, or {\em $k$-uniform} if all its edges are of size $k$. As both Maker-Maker and Maker-Breaker conventions only have two possible outcomes, they naturally translate into a decision problem which asks whether the first player (FP or Maker) has a winning strategy on an input hypergraph $H$.

The Maker-Breaker problem has first been shown {\sf PSPACE}-complete even when restricted to hypergraphs of rank 11 \cite{Schaefer}, and much later to 6-uniform hypergraphs \cite{RW21}, both times via a reduction from \textsc{3-QBF} i.e. \textsc{QBF} with clauses of size 3. The {\sf PSPACE}-completeness result has very recently been improved to 5-uniform hypergraphs \cite{Finn} via a reduction from a restricted version of \textsc{GeneralizedGeography} \cite{Geography}. On the other hand, the problem is tractable for hypergraphs of rank 3 \cite{Florian_MB3}, leaving the 4-uniform case as the only open problem prior to the present paper. Actually, no hardness result exists even for the more general problem of \textsc{UnorderedQBF} on nonmonotone 4-CNF formulas. When it comes to Maker-Breaker games on graphs, the edge-$H$-game has been shown {\sf PSPACE}-complete for several graphs $H$ (including a tree of order 91) but tractable for some others \cite{Hgame}.

As for the Maker-Maker convention, using an argument which reduces $k$-uniform Maker-Breaker games to $(k+1)$-uniform Maker-Maker games \cite{Byskov}, the aforementioned results on the Maker-Breaker convention imply that the Maker-Maker problem is {\sf PSPACE}-complete on 6-uniform hypergraphs. Very recently, the Maker-Maker problem has been shown {\sf PSPACE}-complete on hypergraphs of rank 4 \cite{FlorianJonas_MM4}. On the other end of the spectrum, the Maker-Maker problem on hypergraphs of rank 2 is trivially tractable, but hypergraphs of rank 3 remain an open case.

\medskip

\indent\textbf{Our contribution.} We close the complexity gap for the Maker-Breaker convention, by showing that the Maker-Breaker problem is {\sf PSPACE}-complete for 4-uniform hypergraphs, even with bounded maximum degree. For this, we use the same fragment of \textsc{GeneralizedGeography} that was used by Koepke in \cite{Finn}, employing the same strategy of proof but adapting the gadgets of rank 5 to make them 4-uniform. This result is proved in Section \ref{section3}, after some definitions and preliminary lemmas in Section \ref{section2} on the three games at hand (Maker-Breaker, Maker-Maker, and Geography).
In Section \ref{section4}, we show that the Maker-Breaker construction also works in the Maker-Maker convention, improving the {\sf PSPACE}-completeness result from rank 4 previously to 4-uniform. We then show in Section \ref{section5} that the same construction gives a proof of {\sf PSPACE}-completeness for the vertex-$C_4$-game, which constitutes the first result on the algorithmic complexity of the vertex-$H$-game for both conventions. 
Finally, Section \ref{section6} concludes this paper and provides leads for future work.

\section{Preliminaries}\label{section2}\strut
\indent All hypergraphs will be assumed to be finite. To avoid confusion from the fact that the proofs feature hypergraphs and digraphs at the same time, we use the ``vertex/edge'' terminology for hypergraphs and the ``node/arc'' terminology for digraphs. We reserve the letters $p,q,x,y,z$ for naming vertices and the letters $t,s,u,v,w$ for naming nodes. 

\subsection{Positional games}\strut
\indent Let us first explain how we update the game throughout the players' moves. An important characteristic of the Maker-Breaker convention, in contrast with the Maker-Maker convention, is that intermediate positions of the game can be seen as starting positions. Indeed, consider an intermediate position on a hypergraph $H$, where we denote by $M$ and $B$ the set of vertices picked by Maker and Breaker respectively so far: then, it is as if a fresh Maker-Breaker game is about to start on the {\em updated hypergraph} $H'$ defined by $V(H')=V(H) \setminus (M \cup B)$ and $E(H')=\{e \setminus M \mid e \in E(H) \text{ such that } e \cap B= \varnothing\}$. Indeed, any edge $e$ such that $e \cap B \neq \varnothing$ may be removed from the game as Maker cannot fill it anymore (we say Breaker has {\em killed} $e$), while any edge $e$ such that $e \cap B = \varnothing$ now behaves like the edge $e \setminus M$ in the sense that Maker will finish filling $e$ if and only if she picks all vertices in $e \setminus M$. From this point of view, Maker winning the game is equivalent to the updated hypergraph containing the edge $\varnothing$. An illustration is given in Figure \ref{fig:updateMB}. Using this way of updating the game, even though we always assume that Maker goes first, the case where Breaker goes first actually is polynomially equivalent since we can ``start the game on the second move'' by iterating over all possibilities for the first move.

\begin{figure}[h]
    \centering
    \includegraphics[scale=.5]{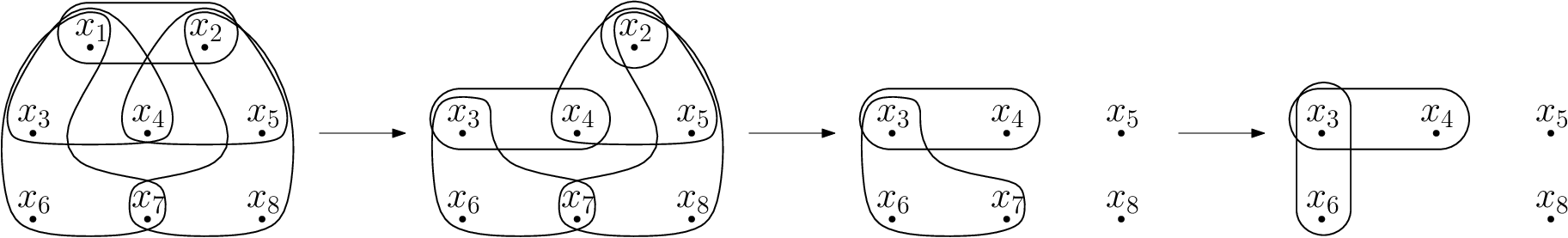}
    \caption{Updating the Maker-Breaker game on a hypergraph $H$ defined by $V(H)=\{x_1,x_2,x_3,x_4,x_5,x_6,x_7,x_8\}$ and $E(H)=\{\{x_1,x_2\},\{x_1,x_3,x_4\},\{x_2,x_4,x_5\},\{x_1,x_3,x_6,x_7\},\{x_2,x_5,x_7,x_8\}\}$. Maker picks $x_1$, then Breaker picks $x_2$, then Maker picks $x_7$.}\label{fig:updateMB}
\end{figure}

Things are different in the Maker-Maker convention. A player who picks a vertex $x$ makes progress in all edges containing $x$, but also kills those edges from the opponent's point of view. Therefore, following the idea introduced in \cite{FlorianJonas_Eurocomb,FlorianJonas_Arxiv}, we see intermediate positions of the game as having two sets of edges, red and blue, corresponding to the winning sets of FP and SP respectively. At the start of the game played on a hypergraph $H$, both sets of edges are equal to $E(H)$. After any number of moves, where we denote by $F$ and $S$ the set of vertices picked by FP and SP respectively so far, and assuming the game is not over yet, the set of {\em updated red edges} is defined as $\{e \setminus F \mid e \in E(H) \text{ such that } e \cap S= \varnothing\}$ while the set of {\em updated blue edges} is defined as $\{e \setminus S \mid e \in E(H) \text{ such that } e \cap F= \varnothing\}$. FP (resp. SP) wins the game if the edge $\varnothing$ first appears as an updated red edge (resp. as an updated blue edge). An illustration is given in Figure \ref{fig:updateMM}.

\begin{figure}[h]
    \centering
    \includegraphics[scale=.5]{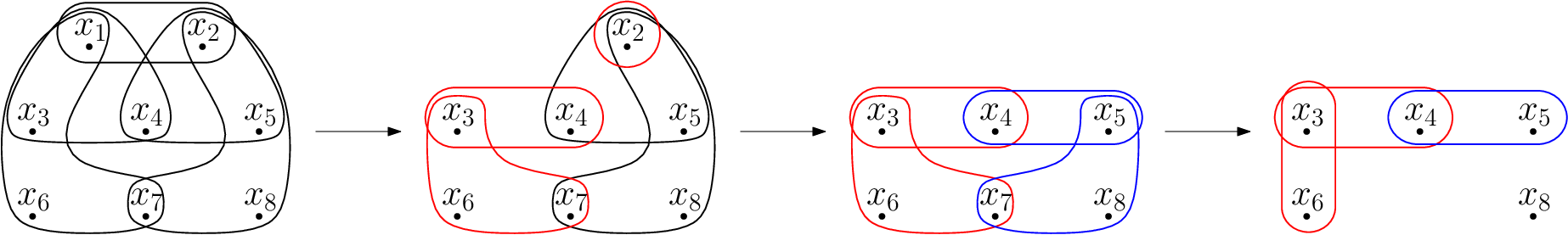}
    \caption{Updating the Maker-Maker game on the same hypergraph as in Figure \ref{fig:updateMB}, with the same moves: FP picks $x_1$, then SP picks $x_2$, then FP picks $x_7$. Updated edges that exist in both red and blue are represented in black.}\label{fig:updateMM}
\end{figure}

Let us state a couple of simple strategic principles which we will use extensively in our proofs. A {\em pairing} of a hypergraph $H$ is a set $\Pi$ of pairwise disjoint pairs of vertices of $H$ such that, for all $e \in E(H)$, there exists $\pi \in \Pi$ such that $\pi \subseteq e$. We say that a pairing $\Pi$ of $H$ {\em uses} a vertex $x \in V(H)$ if $x \in \bigcup_{\pi \in \Pi}\pi$.

\begin{lemma}[Pairing strategy {[folklore]}]\label{lem:pairing}
    Let $H$ be a hypergraph. If $H$ admits a pairing $\Pi$, then Breaker has a winning strategy for the Maker-Breaker game on $H$, and SP has a drawing strategy for the Maker-Maker game on $H$.
\end{lemma}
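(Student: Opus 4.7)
The plan is to exhibit a single explicit strategy and show that it is winning for Breaker in the Maker-Breaker game and drawing for SP in the Maker-Maker game simultaneously. Write $\mathbf{R}$ (for ``responder'') for the player following it and $\mathbf{O}$ for the opponent. The rule is: if $\mathbf{O}$'s latest move $x$ lies in some pair $\pi = \{x,y\} \in \Pi$ (necessarily unique by pairwise disjointness of pairs) and $y$ is still unpicked, then $\mathbf{R}$ responds by picking $y$; otherwise $\mathbf{R}$ picks any unpicked vertex.

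I would then prove by induction on the round number the invariant that, after every move of $\mathbf{R}$, each pair $\pi \in \Pi$ satisfies either (a) $\mathbf{R}$ already holds at least one vertex of $\pi$, or (b) no vertex of $\pi$ is picked yet. The base case is vacuous. For the inductive step, the one non-mechanical point is to check that the prescribed response is legal: when $\mathbf{O}$ picks $x \in \pi = \{x,y\}$, the case ``$\mathbf{O}$ already holds $y$'' is ruled out by the previous invariant applied to $\pi$, since it would force $\mathbf{R}$ to hold $x$, contradicting $\mathbf{O}$'s fresh pick of $x$. Thus $y$ is either unpicked (then $\mathbf{R}$ legally picks it, landing in (a)) or already owned by $\mathbf{R}$ (so (a) persists). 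Pairs other than $\pi$ are untouched by $\mathbf{O}$'s move, and $\mathbf{R}$'s possibly arbitrary pick can only enlarge $\mathbf{R}$'s holdings, so the invariant is preserved globally.

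From this invariant both conclusions follow by the same contradiction. Suppose $\mathbf{O}$ fills some edge $e \in E(H)$ by picking a vertex $x$, and choose any pair $\pi = \{p,q\} \subseteq e$. Then $\{p,q\} \setminus \{x\}$ is nonempty, so $\mathbf{O}$ already held at least one vertex of $\pi$ immediately before this move, ruling out alternative (b) of the invariant; hence (a) held, i.e.\ $\mathbf{R}$ holds some vertex $v \in \pi$. But $\pi \subseteq e$ forces $v$ to also lie in $\mathbf{O}$'s post-move picks, contradicting the disjointness of the two players' picks. So $\mathbf{O}$ never fills an edge: in the Maker-Breaker convention this means Breaker wins, and in the Maker-Maker convention this means FP cannot win, so SP at worst draws. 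The only subtle step is the legality check above, which is precisely why the invariant is stated with the alternative (b) rather than the naive claim ``$\mathbf{R}$ owns one vertex of each pair''.
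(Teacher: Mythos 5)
Your proof is correct and follows essentially the same route as the paper: respond to each pick inside a pair by taking its partner, so that you end up owning a vertex of every pair and hence of every edge. The paper states this in two sentences (deriving the Maker-Maker claim from the Maker-Breaker one); your version merely makes the invariant and the legality of the response explicit.
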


\begin{proof}
    Let us show the first result, as it implies the second. Anytime Maker picks some $x \in \{x,y\} \in \Pi$ with $y$ unpicked, Breaker answers by picking $y$, otherwise Breaker picks an arbitrary vertex. This strategy ensures that Breaker ends up picking at least one vertex in each pair of the pairing $\Pi$, thus preventing Maker from filling an edge.
\end{proof}

The notion of ``greedy move'' is defined in \cite[Lemma 3.8]{FlorianJonas_Arxiv} as a tool to show the optimality of some moves in hardness reductions. The idea is that, if some move $x$ forces an answer $y$ that is dominated by $x$ in the sense that every edge containing $y$ also contains $x$, then picking $x$ is optimal. 
We state a version of this result for both conventions, with essentially the same proof.

\begin{definition}[Greedy move for Maker-Breaker]\label{def:greedy}
    Let $H$ be a hypergraph, let $x \in V(H)$, and let $k \geq 1$ be an integer. Suppose that, for all $y \in V(H) \setminus \{x\}$, one of the following holds:
\begin{enumerate}[label={{\em (\roman*)}},noitemsep,nolistsep]
	\item If Maker picks $x$ and Breaker picks $y$, then Maker can guarantee to win by filling an edge in at most $k$ moves (i.e. $k$ moves by Maker, not counting $x$).
	\item For all $e \in E(H)$: $y \in e \implies x \in e$.
\end{enumerate}    
    We then say that $x$ is a {\em $k$-greedy move} (or simply a {\em greedy move}) in $H$.
\end{definition}

\begin{lemma}\label{lem:greedy}
    A greedy move is always an optimal move for Maker in the Maker-Breaker convention.
\end{lemma}

\begin{proof}
    Let $y \in V(H) \setminus \{x\}$, and let $H'$ be the updated hypergraph obtained after Maker picks $x$ and Breaker picks $y$. We must show that, if Maker has a winning strategy on $H$, then she also has a winning strategy on $H'$. We show the contrapositive: we suppose that Breaker has a winning strategy $\strat$ on $H'$, and we show that Breaker also has a winning strategy on $H$. Note that, since Breaker has a winning strategy on $H'$, item {\em (i)} does not hold, so item {\em (ii)} does i.e. $y \in e \implies x \in e$ for all $e \in E(H)$.
    
    Playing on $H$, Breaker simply follows the strategy $\strat$, except if Maker has just picked $x$ or $y$ in which case Breaker picks the other (or, if $x$ and $y$ are the only unpicked vertices, then Breaker picks one arbitrarily). Let us show that Breaker has indeed killed every edge $e \in E(H)$ in the process. If $e$ contains both $x$ and $y$, then Breaker has killed $e$ since he has picked $x$ or $y$. If $e$ contains neither $x$ nor $y$, then $e \in E(H')$, so Breaker has killed $e$ since $\strat$ is a winning strategy on $H'$. If $e$ contains $x$ but not $y$, then $e \setminus \{x\} \in E(H')$, so Breaker has killed $e$ (actually, he has killed $e \setminus \{x\}$) since $\strat$ is a winning strategy on $H'$. The case where $e$ contains $y$ but not $x$ is impossible since $y \in e \implies x \in e$, so the proof is over.
\end{proof}

\begin{definition}[Greedy move for Maker-Maker]\label{def:greedy2}
    Consider an intermediate position $\Ga$ of the Maker-Maker game, with FP next to play. Let $x$ be an unpicked vertex in $\Ga$, and let $k \geq 1$ be an integer. Suppose that, for every unpicked vertex $y \neq x$, one of the following holds:
\begin{enumerate}[label={{\em (\roman*)}},noitemsep,nolistsep]
	\item If FP picks $x$ and SP picks $y$, then FP can guarantee to win by filling an edge in at most $k$ moves (i.e. $k$ moves by FP, not counting $x$).
	\item For every edge in $\Ga$, be it red or blue: $y \in e \implies x \in e$.
\end{enumerate}
    We then say that $x$ is a {\em $k$-greedy move} (or simply a {\em greedy move}) in $\Ga$.
\end{definition}

\begin{lemma}\label{lem:greedy2}
    A greedy move is always an optimal move for FP in the Maker-Maker convention.
\end{lemma}

\begin{proof}
    Let $y \neq x$ be an unpicked vertex in $\Ga$, and let $\Ga'$ be the intermediate position obtained from $\Ga$ after FP picks $x$ and SP picks $y$. We must show that, if FP has a winning (resp. non-losing) strategy on $\Ga'$, then she also has a winning (resp. non-losing) strategy on $\Ga$. We show the contrapositive: we suppose that SP has a non-losing (resp. winning) strategy $\strat$ on $\Ga'$, and we show that SP also has a non-losing (resp. winning) strategy on $\Ga$. Note that, since SP has a non-losing strategy on $\Ga'$, item {\em (i)} does not hold, so item {\em (ii)} does i.e. $y \in e \implies x \in e$ for every edge $e$ in $\Ga$.
    
    The idea is the same as in the proof of Lemma \ref{lem:greedy}: playing on $\Ga$, SP simply follows the strategy $\strat$, except if FP has just picked $x$ or $y$ in which case SP picks the other (or, if $x$ and $y$ are the only unpicked vertices, then SP picks one arbitrarily). 
    Since $\strat$ is a non-losing strategy for SP on $\Ga'$, FP cannot fill a red edge of $\Ga$, with the same proof as for Lemma \ref{lem:greedy}. Now, suppose that $\strat$ is a winning strategy for SP on $\Ga'$, which implies that SP eventually fills some blue edge $e'$ of $\Ga'$. We must check that SP also fills some blue edge of $\Ga$. This is straightforward: the blue edge $e'$ of $\Ga'$ necessarily stems from some blue edge $e$ of $\Ga$, moreover $x \not\in e$ (otherwise FP would have killed $e$ by picking $x$, so $e'$ would not have existed in $\Ga'$), therefore $y \not\in e$ which means that $e'=e$.
\end{proof}

\subsection{The Geography game}\strut
\indent An instance of \textsc{GeneralizedGeography} is a triple $(N,A,s)$ where $(N,A)$ is a simple digraph and $s \in N$. Two players, Alice and Bob, play the following game. Alice starts the game by placing the token on the prescribed node $s$. After that, a move for a player consists in moving the token from its current node to an out-neighbor of this node: Bob moves the token away from $s$, then Alice makes a move, then Bob makes a move, and so on. Whoever first moves the token to a node that had already been visited (i.e. that had already hosted the token previously) loses the game. The decision problem \textsc{GeneralizedGeography}, which asks whether Alice has a winning strategy, is {\PSPACE}-complete \cite{Geography}. More precisely:

\begin{theorem}\label{the:geography}
    \textup{\textsc{GeneralizedGeography}} is {\PSPACE}-complete, even when restricted to instances $(N,A,s)$ with the following properties:
    \begin{enumerate}[label={\textup{(\alph*)}},noitemsep,nolistsep]
		\item The simple digraph $(N,A)$ is bipartite;
        \item There is no cycle (oriented or not) of length less than 6 in $(N,A)$;
        \item The node $s$ has in-degree 0 and out-degree 1;
        \item Each node $v \neq s$ has either: in-degree 1 and out-degree 1, or in-degree 2 and out-degree 1, or in-degree 1 and out-degree 2.
	\end{enumerate}
\end{theorem}

\begin{proof}
	Properties (b) and (c) are additions compared to the construction provided in \cite[Corollary p. 396]{Geography}. First of all, since $s$ may have out-degree 2 in \cite{Geography}, we obtain property (c) by adding two nodes $s_0$ and $u$ along with the two arcs $\overrightarrow{s_0u}$ and $\overrightarrow{us}$, where $s_0$ is the new designated starting node. This fix is neutral for the game since, at the beginning of the game, Bob has no choice but to move the token from $s_0$ to $u$, and then Alice has no choice but to move the token from $u$ to $s$, and from there it is as if a fresh game started on the original instance. Second, we obtain property (b) by subdividing each arc twice, which triples the length of a shortest cycle (oriented or not). This fix is also neutral for the game, by an analogous argument.
\end{proof}

In the rest of this paper, the game restricted to instances that satisfy properties (a)--(b)--(c)--(d) from Theorem \ref{the:geography} will be referred to as the {\em Geography game}. We note that (b) will only be useful in Section \ref{section5}. 

\section{{\PSPACE}-completeness of 4-uniform Maker-Breaker games}\label{section3}\strut
\indent In this section, we are going to show the following result.

\begin{theorem}\label{the:mainMB}
    Deciding whether Maker has a winning strategy for the Maker-Breaker game on a 4-uniform hypergraph is a {\sf PSPACE}-complete problem.
\end{theorem}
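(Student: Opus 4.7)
Membership in PSPACE is standard: the Maker-Breaker game tree has depth at most $|V(H)|$, so a naive alternating recursive evaluation uses polynomial space. The substantive task is PSPACE-hardness, and my plan is a reduction from the Geography game of Theorem \ref{the:geography}.

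Given an instance $(N,A,s)$ of the Geography game, I would build a hypergraph $H=H_{N,A,s}$ of rank $4$ and then apply Lemma \ref{lem:reduction} repeatedly to inflate it into a $4$-uniform hypergraph with the same Maker-Breaker outcome. The design principle is to encode the token position by Maker-picked ``occupancy'' vertices, one per node of $N$, with each arc $uv \in A$ represented by a small gadget of fresh vertices. The bipartiteness of $(N,A)$ ensures that Alice's moves and Bob's moves line up naturally with Maker's and Breaker's alternating turns: the parity of a node's side in the bipartition dictates who gets to choose the next arc. The bounded in- and out-degrees (at most $2$ each, by Theorem \ref{the:geography}) keep the local branching small, so each gadget only needs a constant number of vertices.

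The heart of the construction is a chain of forcing moves inside each arc gadget. After a player ``moves the token'' into a node $v$ by picking the appropriate vertex, the next few rounds should be entirely forced: I would use edges of size $2$ and $3$ together with Lemma \ref{lem:greedy} to pin down Breaker's unique optimal answer, then Maker's unique optimal next move, and so on, until the token is effectively installed at $v$ with the other player about to choose the next arc. Choice nodes (out-degree $2$) are then natural: both branch vertices lie in $V(H)$ and the player whose turn it is picks one. The losing rule in Geography --- being unable to move to an unvisited neighbour --- becomes Maker's winning condition via a small edge that she can complete exactly when all outgoing arc gadgets from the current node have been blocked.

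The main obstacle, and the reason this case is strictly harder than the $5$-uniform reduction of \cite{Finn}, is the $4$-vertex budget per edge. With only four slots there is very little slack to simultaneously encode which arc is being traversed, whose turn it is, and which node has been visited. I expect to rely on two tools: aggressive vertex sharing between neighbouring arc gadgets so that each node's structural data is touched by several edges at once, and paired auxiliary vertices (Lemma \ref{lem:pairing}) that neutralise any Breaker deviation which an overly tight gadget might otherwise permit. The correctness proof will then be the usual two-way strategy simulation: an Alice strategy is mimicked by Maker via the forcing chains to propagate token moves, and conversely any Maker strategy is projected back to Alice by reading off which branch vertex she picks at each choice node. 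Once the rank-$4$ construction is verified, Lemma \ref{lem:reduction} completes the reduction to $4$-uniform hypergraphs.
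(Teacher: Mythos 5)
Your high-level plan coincides with the paper's: same source problem (the Geography fragment of Theorem~\ref{the:geography}), same use of bipartiteness to assign choice nodes to Maker or Breaker, same idea of forcing chains driven by Lemma~\ref{lem:greedy}, and the same final appeal to Lemma~\ref{lem:reduction} to uniformize. But the proposal stops exactly where the technical content begins. The entire difficulty of the 4-uniform case is the explicit design of the gadgets (the paper needs seven node types, each with a hand-crafted edge set of rank 4) together with a verification that the intended ``regular play'' is optimal for \emph{both} players. Nothing in your sketch pins these down, and the one mechanism you do commit to is misdirected: you propose to use pairings (Lemma~\ref{lem:pairing}) to ``neutralise any Breaker deviation,'' but a pairing is a \emph{Breaker} win certificate --- it cannot help Maker. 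In the actual argument the roles are the reverse of what you describe: Breaker's deviations are punished because Maker's regular moves create one-move threats (so Breaker's answers are forced), whereas \emph{Maker's} deviations are punished by exhibiting, after a suitable Breaker reply, a pairing of the entire updated hypergraph. That second step is the hard one, because Maker may deviate by picking \emph{any} unpicked vertex anywhere in $H$; handling this requires, for every gadget and every vertex $z$ of it, a pairing avoiding $z$, together with a ``cleanness'' condition at the junction pairs so that per-gadget pairings glue into a global one (Claims~\ref{cla:pairing}, \ref{cla:pairing2} and the case analysis of Claim~\ref{cla:optimalmaker}). Your proposal contains no substitute for this machinery.

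A second concrete gap is the win-detection mechanism. You encode ``Bob cannot move to an unvisited node'' as an edge Maker completes ``when all outgoing arc gadgets from the current node have been blocked,'' but within a 4-vertex budget there is no room for an edge that monitors several outgoing gadgets at once. The paper instead detects the loss at the \emph{re-entered} node: when a node of in-degree 2 owned by Bob is first traversed, its gadget is engineered so that one residual edge $\{p,q,z\}$ survives, where $\{p,q\}$ is the unused input pair; re-entering hands Maker $p$ and $q$ and leaves a singleton threat $\{z\}$. Symmetrically, when Alice is the one who re-enters, the re-entered gadget must be completely dead and the remainder of the hypergraph must admit a clean pairing so that Breaker actually wins --- a direction your sketch does not address at all. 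Until gadgets realizing both halves of this asymmetry are written down and their pairing families verified, the reduction is a plausible programme rather than a proof.
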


Since membership in {\sf PSPACE} is well documented for Maker-Breaker games \cite{Schaefer}, it suffices to show {\sf PSPACE}-hardness.

\subsection{Idea of the construction}\strut
\indent We use the same idea as in \cite{Finn}. Given an instance $(N,A,s)$ of the Geography game, we will construct an associated instance of the Maker-Breaker game, which is a 4-uniform hypergraph, as follows. For each node $v \in N$, we define a {\em gadget} hypergraph $H_v$. For each arc $a=\overrightarrow{vw}$, there is a {\em junction pair} of vertices $\{p_a,q_a\}$ which only appear in $H_v$ (as an {\em output pair}) and in $H_w$ (as an {\em input pair}): we say that $p_a$ and $q_a$ are each other's {\em twin}. All vertices that are not junction vertices only appear in a single gadget, and we call them {\em interior vertices}. As such, the connections between the different gadgets are purely provided by the junction vertices. The hypergraph $H$ is simply the union of all these gadgets, i.e. $V(H)=\bigcup_{v \in N} V(H_v)$ and $E(H)=\bigcup_{v \in N} E(H_v)$. See Figure \ref{fig:example} for an illustration.

\begin{figure}[h]
\begin{minipage}{.08\linewidth}
\quad\quad
\end{minipage}
\begin{minipage}{.25\linewidth}
\begin{tikzpicture}
	\tikzset{noeud/.style={draw, circle, minimum height=0.45cm}}
    \node[noeud, label=center:$s$] (S) at (0,0) {};
    \node[noeud, label=center:$v_1$] (V1) at (1.5,0) {};
    \node[noeud, label=center:$v_2$] (V2) at (3.0,0) {};
    \node[noeud, label=center:$v_3$] (V3) at (2.25,-1.5) {};
    \node[noeud, label=center:$v_4$] (V4) at (3.75,-1.5) {};
    \node[noeud, label=center:$v_5$] (V5) at (1.5,-3.0) {};
    \node[noeud, label=center:$v_6$] (V6) at (3.0,-3.0) {};
    \node[noeud, label=center:$v_7$] (V7) at (2.25,-4.5) {};
    \node[noeud, label=center:$v_8$] (V8) at (4.5,-4.5) {};
    \draw[->] (S) -- node[above] {$a$} ++ (V1);
    \draw[->] (V1) -- node[above] {$b$} ++ (V2);
    \draw[->] (V2) -- node[left] {$c$} ++ (V3);
    \draw[->] (V2) -- node[right] {$d$} ++ (V4);
    \draw[->] (V3) -- node[left] {$e$} ++ (V5);
    \draw[->] (V3) -- node[left] {$f$} ++ (V6);
    \draw[->] (V4) -- node[right] {$g$} ++ (V6);
    \draw[->] (V5) -- node[left] {$h$} ++ (V7);
    \draw[->] (V6) -- node[right] {$i$} ++ (V7);
    \draw[->] (V7) -- node[below] {$j$} ++ (V8);
    \draw[->] (V8) -- node[right] {$k$} ++ (V4);
    \end{tikzpicture}
\end{minipage}
\begin{minipage}{.12\linewidth}
\quad\quad
\end{minipage}
\begin{minipage}{.55\linewidth}
\includegraphics[scale=.55]{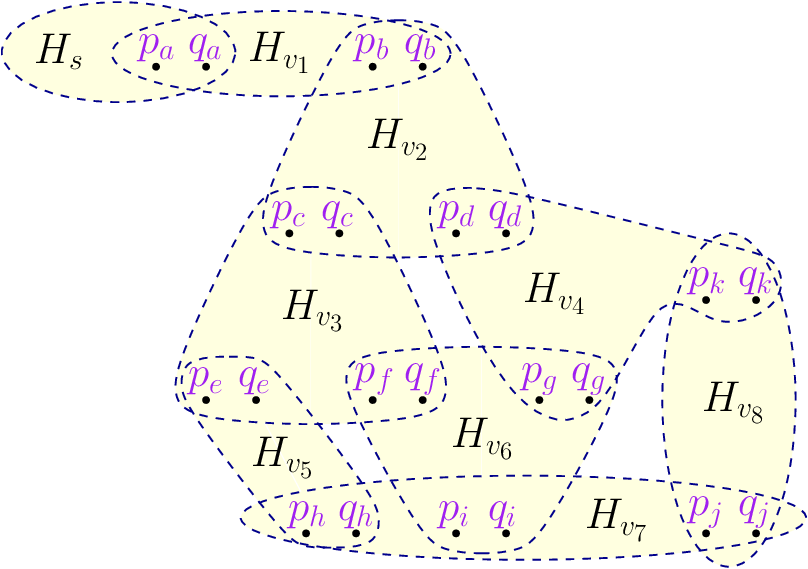}
\end{minipage}
\caption{Left: an example Geography instance. Right: a schematic representation of the corresponding Maker-Breaker game.}\label{fig:example}
\end{figure}

We define {\em regular play} for the Maker-Breaker game on $H$, referring to a very restricted set of strategies for both players which we will prove to be optimal. Regular play goes as follows. At every point, there is a single {\em active node}, which is defined as the node $s$ at the start. Each single move in the Geography game, where the player moves the token away from some node $v$, will correspond to a {\em sequence} of moves inside the gadget $H_v$ in the Maker-Breaker game, starting with a Maker move and ending with a Breaker move. That sequence goes through the gadget $H_v$, entering it via an input pair (except for $v=s$ as the gadget $H_s$ has no input pair) and exiting it via an output pair of which Maker picks both vertices. This output pair is the input pair of a unique gadget, whose corresponding node becomes the new active node. That sequence is virtually unique, except if $v$ has out-degree 2, in which case the player gets to choose between two sequences exiting via one output pair or the other, corresponding to the two options in the Geography game. The bipartiteness of the Geography digraph is key here. Indeed, it allows us to associate each node (and each gadget as a consequence) to a specific player in advance, Maker or Breaker, and to design the gadgets so that each choice between two sequences is made by the correct player: Maker if Alice makes the move, or Breaker if Bob makes the move. Regular play ends after the first time that a node that had already been active before gets activated again. Note that this can only happen for nodes of in-degree 2.

At the end of a sequence of regular play, the gadget that the players have gone through is ``cleared'' in the sense that Breaker has killed all its edges, apart from one specific case. For each node $v$ of in-degree 2 that can only be entered by Bob, we design the gadget $H_v$ so that an edge of the form $\{p,q,z\}$ survives in the updated gadget at the end of the sequence, where $\{p,q\}$ is the unused input pair of $H_v$. This way, if the gadget is entered again (necessarily via that input pair $\{p,q\}$), which corresponds to Bob losing the Geography game by reentering the node $v$, Maker will pick $p$ and $q$ as per regular play and thus win the game by picking $z$ as her next move. This is illustrated in Figure \ref{fig:example2}, corresponding to the situation after the moves $s \to v_1 \to v_2 \to v_4 \to v_6 \to v_7 \to v_8$ in the Geography instance from Figure \ref{fig:example}. In the Geography game, the next move will be Bob reentering $v_4$ and losing. In the Maker-Breaker game, the next sequence of regular play will take place inside $H_{v_8}$, with Maker picking $p_k$ and $q_k$ at the end, thus setting up her win on the next move where she will pick $z_k$. On the contrary, in the case where Alice loses, a fully cleared gadget is reentered, which causes Maker to relinquish the initiative and Breaker to win the game using a pairing strategy.

\begin{figure}[h]
\begin{minipage}{.08\linewidth}
\quad\quad
\end{minipage}
\begin{minipage}{.25\linewidth}
\begin{tikzpicture}
	\tikzset{noeud/.style={draw, circle, minimum height=0.45cm}}
	\tikzset{noeud_visite/.style={draw, fill=mygray, circle, minimum height=0.45cm}}
    \node[noeud_visite, label=center:$s$] (S) at (0,0) {};
    \node[noeud_visite, label=center:$v_1$] (V1) at (1.5,0) {};
    \node[noeud_visite, label=center:$v_2$] (V2) at (3.0,0) {};
    \node[noeud, label=center:$v_3$] (V3) at (2.25,-1.5) {};
    \node[noeud_visite, label=center:$v_4$] (V4) at (3.75,-1.5) {};
    \node[noeud, label=center:$v_5$] (V5) at (1.5,-3.0) {};
    \node[noeud_visite, label=center:$v_6$] (V6) at (3.0,-3.0) {};
    \node[noeud_visite, label=center:$v_7$] (V7) at (2.25,-4.5) {};
    \node[noeud_visite, label=center:$v_8$] (V8) at (4.5,-4.5) {};
    \draw[->,line width=1pt] (S) -- node[above] {$a$} ++ (V1);
    \draw[->,line width=1pt] (V1) -- node[above] {$b$} ++ (V2);
    \draw[->] (V2) -- node[left] {$c$} ++ (V3);
    \draw[->,line width=1pt] (V2) -- node[right] {$d$} ++ (V4);
    \draw[->] (V3) -- node[left] {$e$} ++ (V5);
    \draw[->] (V3) -- node[left] {$f$} ++ (V6);
    \draw[->,line width=1pt] (V4) -- node[right] {$g$} ++ (V6);
    \draw[->] (V5) -- node[left] {$h$} ++ (V7);
    \draw[->,line width=1pt] (V6) -- node[right] {$i$} ++ (V7);
    \draw[->,line width=1pt] (V7) -- node[below] {$j$} ++ (V8);
    \draw[->] (V8) -- node[right] {$k$} ++ (V4);
    \end{tikzpicture}
\end{minipage}
\begin{minipage}{.12\linewidth}
\quad\quad
\end{minipage}
\begin{minipage}{.55\linewidth}
\includegraphics[scale=.55]{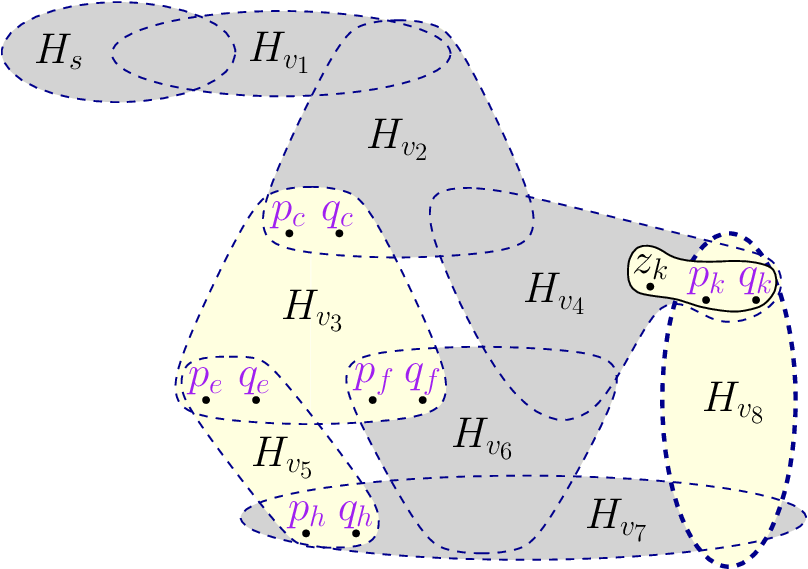}
\end{minipage}
\caption{Illustration of a scenario where Maker wins, pictured before the final sequence of regular play. Cleared areas are grayed out.}\label{fig:example2}
\end{figure}

Pairings will also be key to our proof that regular play is optimal. A tricky aspect of hardness reductions for Maker-Breaker games is that, though it is easy to force Breaker's moves using short-term Maker threats, it is more difficult to constrain Maker's moves. Our construction is designed so that, in all nontrivial cases where Maker deviates from regular play, Breaker can punish her by playing a move after which the updated hypergraph admits a pairing. To exhibit this pairing, the idea is to find a pairing in each (updated) gadget, and take the union of all these. Mind that this does not work if some junction vertex is paired with different vertices in the two gadgets it appears in. For this reason, we introduce the following definitions. A pair of vertices is called {\em clean} if it is a junction pair or if both its vertices are interior vertices of the same gadget, otherwise it is called {\em mixed}. A pairing is deemed {\em clean} if all its pairs are clean. Clean pairings of (updated) gadgets are compatible with each other, with the junction pairs stapling them together to form a global pairing. Whenever Maker violates regular play by picking a junction vertex $p$, we use its twin $q$ in the pairing of the gadget in which $\{p,q\}$ is an input pair, while we manage without $p$ and $q$ in the pairing of the gadget in which $\{p,q\}$ is an output pair. This is illustrated in Figure \ref{fig:example3}.

\begin{figure}[h]
\centering
\includegraphics[scale=.55]{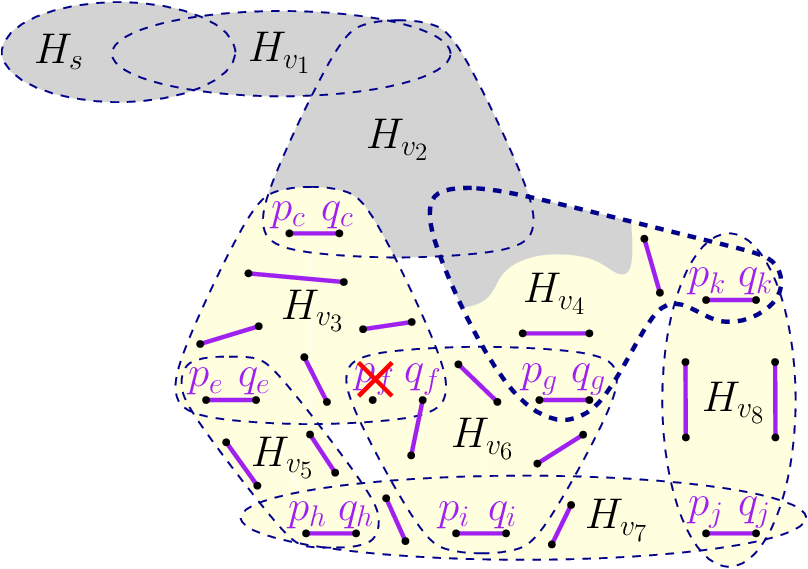}
\caption{Schematic representation of the pairing obtained after a round in which Maker has picked an irregular vertex ($p_f$ here, while $v_4$ was the active node). All pairings of updated gadgets are clean, apart from that of $H_{v_6}$ since it pairs $q_f$ with some interior vertex.}\label{fig:example3}
\end{figure}

\subsection{The gadgets and their pairings}\strut 
\indent Consider a 2-coloring of the digraph $(N,A)$. Similarly to what is done in \cite{Finn}, we partition the set $N$ into 7 types, depending on the node's color as well as its in-degree and out-degree:
\begin{itemize}[nolistsep,noitemsep]
    \item $B_{0,1}=\{s\}$ contains the only node of in-degree 0 and out-degree 1;
    \item $B_{1,1}$ is the set of all nodes with same color as $s$, in-degree 1 and out-degree 1;
    \item $B_{2,1}$ is the set of all nodes with same color as $s$, in-degree 2 and out-degree 1;
    \item $B_{1,2}$ is the set of all nodes with same color as $s$, in-degree 1 and out-degree 2;
    \item $M_{1,1}$ is the set of all nodes with different color than $s$, in-degree 1 and out-degree 1;
    \item $M_{2,1}$ is the set of all nodes with different color than $s$, in-degree 2 and out-degree 1;
    \item $M_{1,2}$ is the set of all nodes with different color than $s$, in-degree 1 and out-degree 2.
\end{itemize}
For instance, in the exemple from Figure \ref{fig:example}, we have: $B_{0,1}=\{s\}$, $B_{1,1}=\{v_5,v_8\}$, $B_{2,1}=\{v_6\}$, $B_{1,2}=\{v_2\}$, $M_{1,1}=\{v_1\}$, $M_{2,1}=\{v_4,v_7\}$, $M_{1,2}=\{v_3\}$.

The gadgets and their associated sequence(s) of regular play are given in Tables \ref{tab:tableau1}--\ref{tab:tableau2}--\ref{tab:tableau3}. 
For a generic node $v$ of each possible type, these tables provide:
\begin{itemize}
	\item[--] A graphical representation of the gadget $H_v$ (which is only schematic for $v=s$), where edges are colored for readability only. We list all the edges anyway, in case the drawing is not clear enough. Throughout the proof, every time we consider a specific gadget, we will freely refer to its vertices by these names: we point out that, even though every gadget has a vertex named ``$y_1$'' for instance, there will be no ambiguity in practice as to which gadget we are referring to at any given moment.
	\item[--] An alternative representation of $H_v$ through a graph whose 4-cycles are the edges of $H_v$. This will only be used in Section \ref{section5} and can be ignored for the moment.
	\item[--] The sequence(s) of regular play that takes place inside $H_v$ when $v$ is the active node. Maker's moves are written in red and Breaker's moves are written in blue (for the reader using a grayscale version of this paper, simply recall that Maker makes the first move in every sequence). An arrow labelled ``g'' (resp. ``2-g'', resp. ``3-g'') signifies a 1-greedy move (resp. 2-greedy move, resp. 3-greedy move) by Maker as per Definition \ref{def:greedy}, and an arrow labelled ``f'' signifies a {\em forced move} by Breaker, meaning that Maker was threatening a win in one move.
	
	For example, let us detail the case $v \in M_{2,1}$ in Table \ref{tab:tableau3}. Suppose that $u_1$ was active immediately prior i.e. $v$ has been entered through the arc $a$, meaning that Maker has already picked $p_a$ and $q_a$ while $u_1$ was active. Regular play on $H_v$ starts as follows: Maker picks $x_1$, Breaker picks $z_a$, Maker picks $p_c$, Breaker picks $y_1$, Maker picks $x_2$. Now, Breaker has a (strategically inconsequential) choice between three possible moves, all of which comply with regular play. If Breaker picks $z_1$ (resp. $z_2$, resp. $z_3$), then Maker picks $z_2$ (resp. $z_3$, resp. $z_1$) and Breaker picks $z_3$ (resp. $z_1$, resp. $z_2$). After that, Maker picks $q_c$ and Breaker picks $y_2$ to end this sequence of regular play. The case $v \in B_{2,1}$ is analogous. The only other times when regular play does not suggest a unique move happen for $v \in B_{1,2}$ and $v \in M_{1,2}$ of course, as Breaker or Maker respectively must make a choice between the two out-neighbors at some point.
	
\end{itemize}

\begin{table}[H]

    \centering
    
    \begin{tabular}{ | >{\centering\arraybackslash} m{1.3cm} | >{\centering\arraybackslash} m{5.6cm} | >{\centering\arraybackslash} m{5.4cm} | >{\centering\arraybackslash} m{2.45cm} |}
    
    \hline
    
    Type of the node $v$
    &
    Gadget $H_v$
    &
    Graph version of $H_v$ for the vertex-$C_4$-game
    & Sequence of regular play when $v$ is active
    
    \\\hline
    
    {$v$}{$\,=\,$}{$s$}
    
    {$\in\,$}{$B_{0,1}$}
    
	\begin{tikzpicture}
	\tikzset{noeud/.style={draw, circle, minimum height=0.45cm}}
    \node[noeud, label=center:$v$] (V) at (0,0) {};
    \node[noeud, label=center:$w$] (W) at (0,-1) {};
    \draw[->] (V) -- node[right] {$a$} ++ (W);
    \draw (0, 0.5) node {\vphantom{a}};
    \end{tikzpicture}

    & \includegraphics[scale=.5]{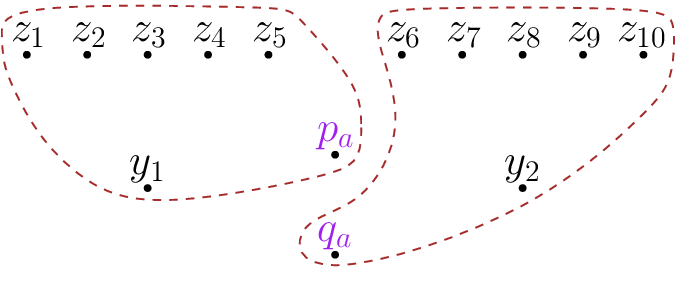}
    
    Edges: $\{p_a,y_1,z_i,z_j\}$ for all $i \neq j \in \{1,2,3,4,5\}$, and $\{q_a,y_2,z_i,z_j\}$ for all $i \neq j \in \{6,7,8,9,10\}$

    & \includegraphics[scale=.5]{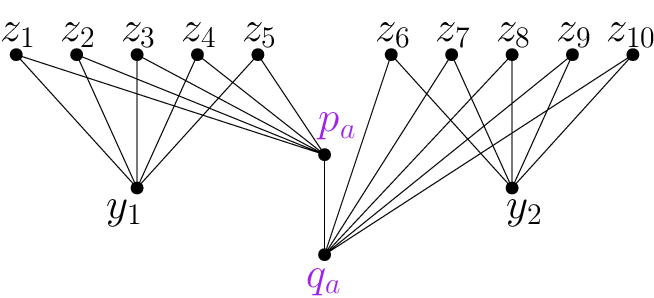}
    \,
    
    \,
    
    \,
    
    \,
    &
    \begin{tikzcd}[row sep = tiny , sep = tiny]
             \arrow{r}{\text{3-g}} & \textcolor{red}{p_a} \arrow{r}{} & \textcolor{blue}{y_1}
    \end{tikzcd}
    
    \begin{tikzcd}[row sep = tiny , sep = tiny]
             \arrow{r}{\text{3-g}} & \textcolor{red}{q_a} \arrow{r}{} & \textcolor{blue}{y_2}
    \end{tikzcd}
    \\\hline
    
    {$v\,$}{$\in\,$}{$B_{1,1}$}
    
    {$\cup\,$}{$M_{1,1}$}
    
	\begin{tikzpicture}
	\tikzset{noeud/.style={draw, circle, minimum height=0.45cm}}
	\node[noeud, label=center:$u$] (U) at (0,1) {};
    \node[noeud, label=center:$v$] (V) at (0,0) {};
    \node[noeud, label=center:$w$] (W) at (0,-1) {};
    \draw[->] (U) -- node[right] {$a$} ++ (V);
    \draw[->] (V) -- node[right] {$b$} ++ (W);
    \draw (0, 1.5) node {\vphantom{a}};
    \end{tikzpicture}

    & \includegraphics[scale=.5]{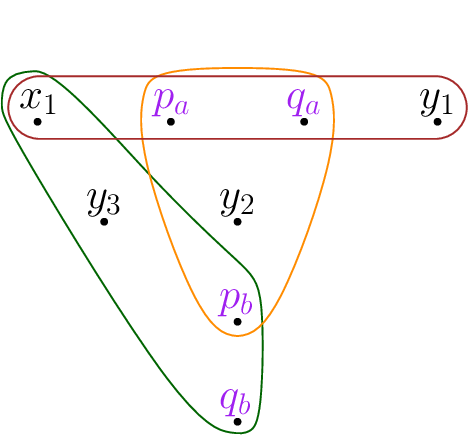}
    
    Edges: $\{p_a,q_a,x_1,y_1\}$, $\{p_a,q_a,p_b,y_2\}$, $\{p_b,q_b,x_1,y_3\}$
    
    & \includegraphics[scale=.5]{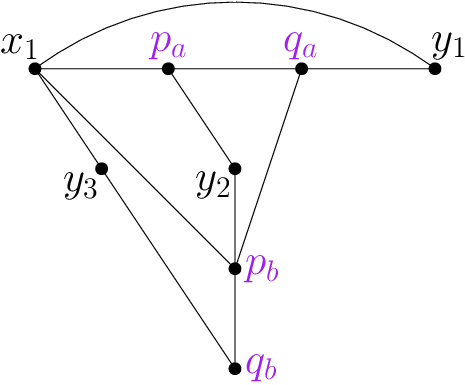}
    \,
    
    \,
    & 
    \begin{tikzcd}[row sep = tiny , sep = tiny]
             \arrow{r}{\text{g}} & \textcolor{red}{x_1} \arrow{r}{\text{f}} & \textcolor{blue}{y_1}
    \end{tikzcd}
    
    \begin{tikzcd}[row sep = tiny , sep = tiny]
             \arrow{r}{\text{g}} & \textcolor{red}{p_b} \arrow{r}{\text{f}} & \textcolor{blue}{y_2}
    \end{tikzcd}
    
    \begin{tikzcd}[row sep = tiny , sep = tiny]
             \arrow{r}{\text{g}} & \textcolor{red}{q_b} \arrow{r}{\text{f}} & \textcolor{blue}{y_3}
    \end{tikzcd}
    \\\hline
    
    \end{tabular}
    
    \caption{The gadgets and their associated sequence(s) of regular play (part 1/3).}\label{tab:tableau1}
    
\end{table}

\begin{table}[H]

    \centering
    
    \begin{tabular}{ | >{\centering\arraybackslash} m{1.6cm} | >{\centering\arraybackslash} m{5.1cm} | >{\centering\arraybackslash} m{3.1cm} | >{\centering\arraybackslash} m{4.85cm} |}
    
    \hline
    
    Type of the node $v$
    &
    Gadget $H_v$
    &
    Graph version of $H_v$ for the vertex-$C_4$-game
    & Sequence of regular play when $v$ is active
    
    \\\hline
    
    {$v\,$}{$\in\,$}{$B_{1,2}$}
    
	\begin{tikzpicture}
	\tikzset{noeud/.style={draw, circle, minimum height=0.45cm}}
	\node[noeud, label=center:$u$] (U) at (0,1) {};
    \node[noeud, label=center:$v$] (V) at (0,0) {};
    \node[noeud, label=center:$w_1$] (W1) at (-0.5,-1) {};
    \node[noeud, label=center:$w_2$] (W2) at (0.5,-1) {};
    \draw[->] (U) -- node[right] {$a$} ++ (V);
    \draw[->] (V) -- node[left] {$b$} ++ (W1);
    \draw[->] (V) -- node[right] {$c$} ++ (W2);
    \draw (0, 1.5) node {\vphantom{a}};
    \end{tikzpicture}
    & \includegraphics[scale=.5]{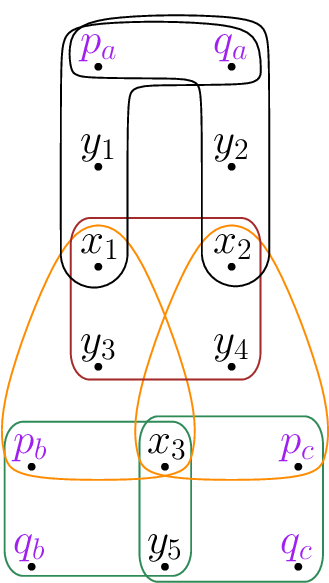}
    
    Edges: $\{p_a,q_a,x_1,y_1\}$, $\{p_a,q_a,x_2,y_2\}$, $\{x_1,x_2,y_3,y_4\}$, $\{p_b,x_1,x_3,y_3\}$, $\{p_c,x_2,x_3,y_4\}$, $\{p_b,q_b,x_3,y_5\}$, $\{p_c,q_c,x_3,y_5\}$
    & \includegraphics[scale=.5]{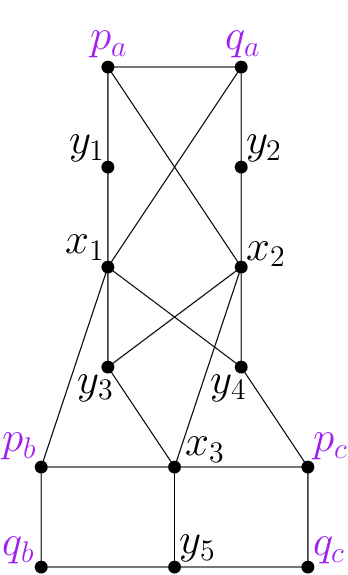}
    \,
    
    \,
    
    \,
    
    \,
    &
    \begin{tikzcd}[row sep = tiny , sep = tiny]
             \arrow{r}{\text{g}} & \textcolor{red}{x_1} \arrow{r}{\text{f}} & \textcolor{blue}{y_1} \arrow{r}{\text{g}} & \textcolor{red}{x_2} \arrow{r}{\text{f}} & \textcolor{blue}{y_2} \arrow{r}{} & \textcolor{red}{x_3}
    \end{tikzcd}
    
    \medskip
    followed by, if Breaker chooses to activate $w_1$:
    
    \begin{tikzcd}[row sep = tiny , sep = tiny]
             \arrow{r}{} & \textcolor{blue}{y_4} \arrow{r}{\text{g}} & \textcolor{red}{p_b} \arrow{r}{\text{f}} & \textcolor{blue}{y_3} \arrow{r}{} & \textcolor{red}{q_b} \arrow{r}{\text{f}} & \textcolor{blue}{y_5}
	\end{tikzcd}
	
	\medskip
	or followed by, if Breaker chooses to activate $w_2$:
	
    \begin{tikzcd}[row sep = tiny , sep = tiny]
             \arrow{r}{} & \textcolor{blue}{y_3} \arrow{r}{\text{g}} & \textcolor{red}{p_c} \arrow{r}{\text{f}} & \textcolor{blue}{y_4} \arrow{r}{} & \textcolor{red}{q_c} \arrow{r}{\text{f}} & \textcolor{blue}{y_5}
	\end{tikzcd}
    \\\hline
    
    {$v\,$}{$\in\,$}{$M_{1,2}$}
    
	\begin{tikzpicture}
	\tikzset{noeud/.style={draw, circle, minimum height=0.45cm}}
	\node[noeud, label=center:$u$] (U) at (0,1) {};
    \node[noeud, label=center:$v$] (V) at (0,0) {};
    \node[noeud, label=center:$w_1$] (W1) at (-0.5,-1) {};
    \node[noeud, label=center:$w_2$] (W2) at (0.5,-1) {};
    \draw[->] (U) -- node[right] {$a$} ++ (V);
    \draw[->] (V) -- node[left] {$b$} ++ (W1);
    \draw[->] (V) -- node[right] {$c$} ++ (W2);
    \draw (0, 1.5) node {\vphantom{a}};
    \end{tikzpicture}
    & \includegraphics[scale=.5]{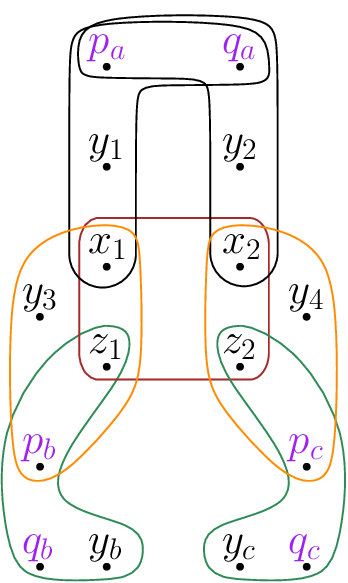}
    
    Edges: $\{p_a,q_a,x_1,y_1\}$, $\{p_a,q_a,x_2,y_2\}$, $\{x_1,x_2,z_1,z_2\}$, $\{p_b,x_1,y_3,z_1\}$, $\{p_c,x_2,y_4,z_2\}$, $\{p_b,q_b,y_b,z_1\}$, $\{p_c,q_c,y_c,z_2\}$
    & \includegraphics[scale=.5]{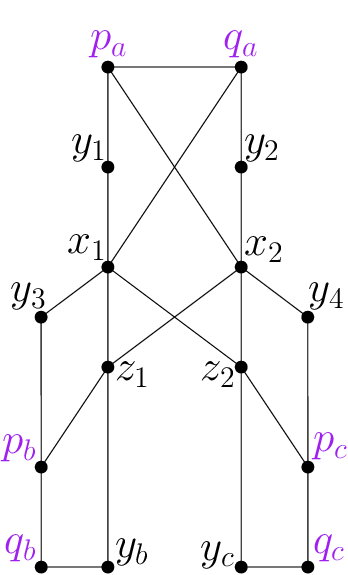}
    \,
    
    \,
    
    \,
    
    \,
    & 
    \begin{tikzcd}[row sep = tiny , sep = tiny]
             \arrow{r}{\text{g}} & \textcolor{red}{x_1} \arrow{r}{\text{f}} & \textcolor{blue}{y_1} \arrow{r}{\text{g}} & \textcolor{red}{x_2} \arrow{r}{\text{f}} & \textcolor{blue}{y_2}
    \end{tikzcd}
    
    \medskip      
    followed by, if Maker chooses to activate $w_1$:
    
    \begin{tikzcd}[row sep = tiny , sep = tiny]
             \arrow{r}{} & \textcolor{red}{z_1} \arrow{r}{\text{f}} & \textcolor{blue}{z_2}
	\end{tikzcd}
	
	\begin{tikzcd}[row sep = tiny , sep = tiny]
             \arrow{r}{\text{g}} & \textcolor{red}{p_b} \arrow{r}{\text{f}} & \textcolor{blue}{y_3} \arrow{r}{\text{g}} & \textcolor{red}{q_b} \arrow{r}{\text{f}} & \textcolor{blue}{y_b}
	\end{tikzcd}
	
	\medskip
	or followed by, if Maker chooses to activate $w_2$:
	
    \begin{tikzcd}[row sep = tiny , sep = tiny]
             \arrow{r}{} & \textcolor{red}{z_2} \arrow{r}{\text{f}} & \textcolor{blue}{z_1}
	\end{tikzcd}
	
	\begin{tikzcd}[row sep = tiny , sep = tiny]
             \arrow{r}{\text{g}} & \textcolor{red}{p_c} \arrow{r}{\text{f}} & \textcolor{blue}{y_4} \arrow{r}{\text{g}} & \textcolor{red}{q_c} \arrow{r}{\text{f}} & \textcolor{blue}{y_c}
	\end{tikzcd}
    \\\hline
    
    \end{tabular}
    
    \caption{The gadgets and their associated sequence(s) of regular play (part 2/3).}\label{tab:tableau2}
    
\end{table}

\begin{table}[H]

    \centering
    
    \begin{tabular}{ | >{\centering\arraybackslash} m{1.4cm} | >{\centering\arraybackslash} m{4.95cm} | >{\centering\arraybackslash} m{4.7cm} | >{\centering\arraybackslash} m{3.6cm} |}
    
    \hline
    
    Type of the node $v$
    &
    Gadget $H_v$
    &
    Graph version of $H_v$ for the vertex-$C_4$-game
    & Sequence of regular play when $v$ is active
    
    \\\hline
    
    {$v\,$}{$\in\,$}{$B_{2,1}$}
    
	\begin{tikzpicture}
	\tikzset{noeud/.style={draw, circle, minimum height=0.45cm}}
	\node[noeud, label=center:$u_1$] (U1) at (-0.5,1) {};
	\node[noeud, label=center:$u_2$] (U2) at (0.5,1) {};
    \node[noeud, label=center:$v$] (V) at (0,0) {};
    \node[noeud, label=center:$w$] (W) at (0,-1) {};
    \draw[->] (U1) -- node[left] {$a$} ++ (V);
    \draw[->] (U2) -- node[right] {$b$} ++ (V);
    \draw[->] (V) -- node[right] {$c$} ++ (W);
    \draw (0, 1.5) node {\vphantom{a}};
    \end{tikzpicture}
    & \includegraphics[scale=.5]{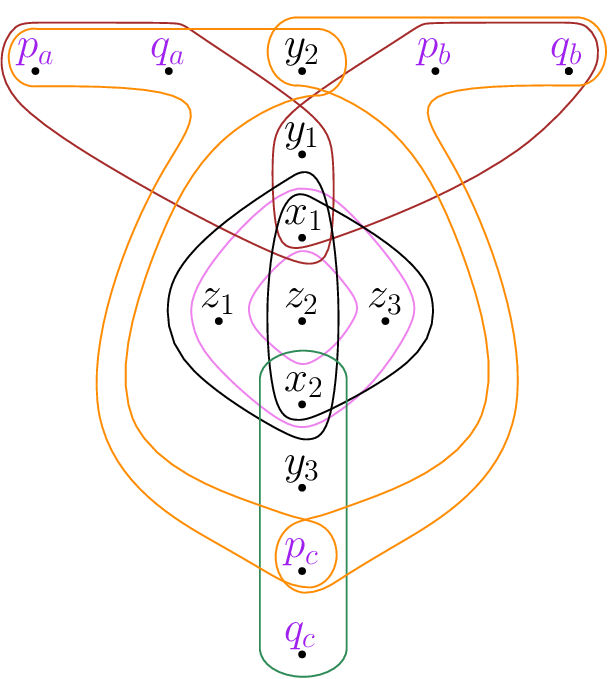}
    
    Edges: $\{p_a,q_a,x_1,y_1\}$, $\{p_b,q_b,x_1,y_1\}$, $\{p_a,q_a,p_c,y_2\}$,  $\{p_b,q_b,p_c,y_2\}$, $\{x_1,x_2,z_1,z_2\}$, $\{x_1,x_2,z_1,z_3\}$, $\{x_1,x_2,z_2,z_3\}$, $\{p_c,q_c,x_2,y_3\}$
    & \includegraphics[scale=.5]{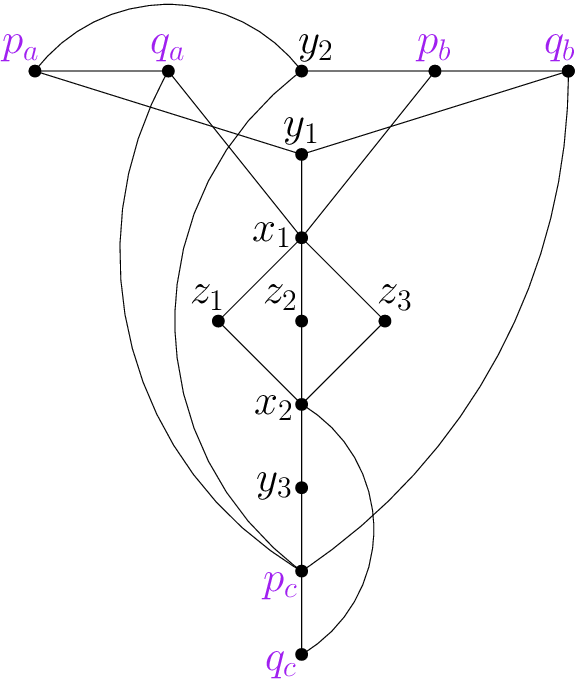}
    \,
    
    \,
    
    \,
    
    \,
    
    \,
    &
    \begin{tikzcd}[row sep = tiny , sep = tiny]
             \arrow{r}{\text{g}} & \textcolor{red}{x_1} \arrow{r}{\text{f}} & \textcolor{blue}{y_1}
	\end{tikzcd}
	
    \begin{tikzcd}[row sep = tiny , sep = tiny]
             \arrow{r}{\text{g}} & \textcolor{red}{p_c} \arrow{r}{\text{f}} & \textcolor{blue}{y_2} \arrow{r}{\text{2-g}} & \textcolor{red}{x_2}
	\end{tikzcd}
	
    \begin{tikzcd}[row sep = tiny , sep = tiny]
             & \textcolor{blue}{z_1} \arrow{r}{\text{g}} & \textcolor{red}{z_2} \arrow{r}{\text{f}} & \textcolor{blue}{z_3} \arrow[bend left]{dr}{\text{g}} & \\
             \arrow[bend left]{ur}[swap]{} \arrow[bend right]{dr}[swap]{} \arrow{r}{} & \textcolor{blue}{z_2} \arrow{r}{\text{g}} & \textcolor{red}{z_3} \arrow{r}{\text{f}} & \textcolor{blue}{z_1} \arrow{r}{\text{g}} & \textcolor{red}{q_c} \\
            & \textcolor{blue}{z_3} \arrow{r}{\text{g}} & \textcolor{red}{z_1} \arrow{r}{\text{f}} & \textcolor{blue}{z_2} \arrow[bend right]{ur}[swap]{\text{g}} &
\end{tikzcd}

\begin{tikzcd}[row sep = tiny , sep = tiny]
             \arrow{r}{\text{f}} & \textcolor{blue}{y_3}
	\end{tikzcd}
	
	\medskip
	($w$ is the new active node)
    \\\hline
    
    {$v\,$}{$\in\,$}{$M_{2,1}$}
    
	\begin{tikzpicture}
	\tikzset{noeud/.style={draw, circle, minimum height=0.45cm}}
	\node[noeud, label=center:$u_1$] (U1) at (-0.5,1) {};
	\node[noeud, label=center:$u_2$] (U2) at (0.5,1) {};
    \node[noeud, label=center:$v$] (V) at (0,0) {};
    \node[noeud, label=center:$w$] (W) at (0,-1) {};
    \draw[->] (U1) -- node[left] {$a$} ++ (V);
    \draw[->] (U2) -- node[right] {$b$} ++ (V);
    \draw[->] (V) -- node[right] {$c$} ++ (W);
    \draw (0, 1.5) node {\vphantom{a}};
    \end{tikzpicture}
    & \includegraphics[scale=.5]{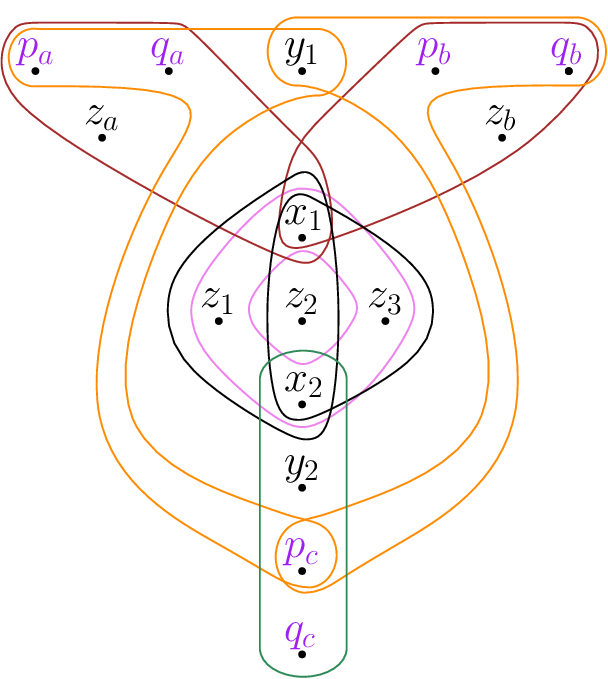}
    
    Edges: $\{p_a,q_a,x_1,z_a\}$, $\{p_b,q_b,x_1,z_b\}$, $\{p_a,q_a,p_c,y_1\}$,  $\{p_b,q_b,p_c,y_1\}$, $\{x_1,x_2,z_1,z_2\}$, $\{x_1,x_2,z_1,z_3\}$, $\{x_1,x_2,z_2,z_3\}$, $\{p_c,q_c,x_2,y_2\}$
    & \includegraphics[scale=.5]{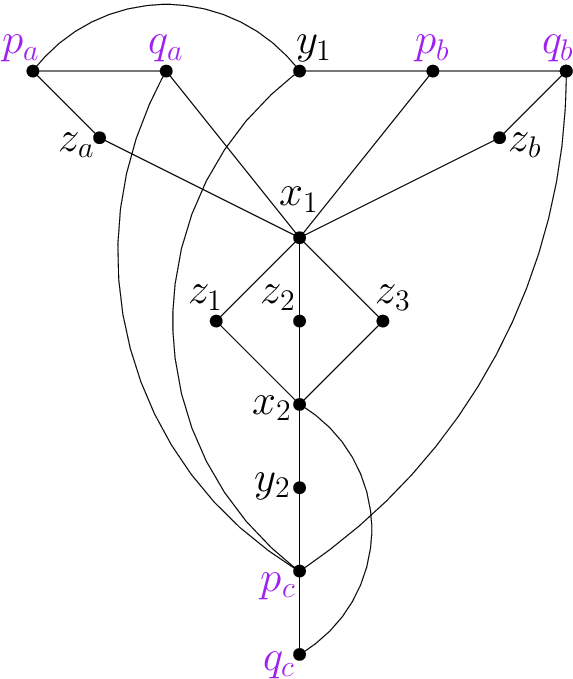}
    \,
    
    \,
    
    \,
    
    \,
    
    \,
    & if $u_1$ was active immediately prior: 
    
    \begin{tikzcd}[row sep = tiny , sep = tiny]
             \arrow{r}{\text{g}} & \textcolor{red}{x_1} \arrow{r}{\text{f}} & \textcolor{blue}{z_a}
	\end{tikzcd}
	
	\medskip
	or, if $u_2$ was active immediately prior:
	
    \begin{tikzcd}[row sep = tiny , sep = tiny]
             \arrow{r}{\text{g}} & \textcolor{red}{x_1} \arrow{r}{\text{f}} & \textcolor{blue}{z_b}
	\end{tikzcd}
	
	\medskip
    followed by, in both cases:
    
    \begin{tikzcd}[row sep = tiny , sep = tiny]
             \arrow{r}{\text{g}} & \textcolor{red}{p_c} \arrow{r}{\text{f}} & \textcolor{blue}{y_1} \arrow{r}{\text{2-g}} & \textcolor{red}{x_2}
	\end{tikzcd}
	
    \begin{tikzcd}[row sep = tiny , sep = tiny]
             & \textcolor{blue}{z_1} \arrow{r}{\text{g}} & \textcolor{red}{z_2} \arrow{r}{\text{f}} & \textcolor{blue}{z_3} \arrow[bend left]{dr}{\text{g}} & \\
             \arrow[bend left]{ur}[swap]{} \arrow[bend right]{dr}[swap]{} \arrow{r}{} & \textcolor{blue}{z_2} \arrow{r}{\text{g}} & \textcolor{red}{z_3} \arrow{r}{\text{f}} & \textcolor{blue}{z_1} \arrow{r}{\text{g}} & \textcolor{red}{q_c} \\
            & \textcolor{blue}{z_3} \arrow{r}{\text{g}} & \textcolor{red}{z_1} \arrow{r}{\text{f}} & \textcolor{blue}{z_2} \arrow[bend right]{ur}[swap]{\text{g}} &
\end{tikzcd}

\begin{tikzcd}[row sep = tiny , sep = tiny]
             \arrow{r}{\text{f}} & \textcolor{blue}{y_2}
	\end{tikzcd}
	
	\medskip
	($w$ is the new active node)
    \\\hline
    
    \end{tabular}
    
    \caption{The gadgets and their associated sequence(s) of regular play (part 3/3).}\label{tab:tableau3}
    
\end{table}

The next claim provides a (clean) pairing for each intact gadget, as well as a (not necessarily clean) pairing for each gadget in which a given vertex $z$ is not usable. The latter is important, as it shows that a pairing still exists if Maker has picked a single vertex (namely, $z$) inside a gadget, possibly as a violation of regular play.

\begin{claim}\label{cla:pairing}
    Let $v \in N \setminus B_{0,1} = N \setminus \{s\}$. The gadget $H_v$ admits all the following pairings:
    \begin{itemize}[noitemsep,nolistsep]
        \item a clean pairing $\Pi(v)$;
        \item for each vertex $z \in V(H_v)$ which is not an input vertex of $H_v$: a clean pairing $\Pi(v,z)$ which does not use $z$;
        \item for each vertex $z \in V(H_v)$ which is an input vertex of $H_v$: a pairing $\Pi(v,z)$ which does not use $z$ and whose only mixed pair contains the twin of $z$ paired with an interior vertex.
    \end{itemize}
\end{claim}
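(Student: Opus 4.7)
The plan is to proceed by a case analysis on the type of $v$, going through each of the six possible cases ($v \in B_{1,1}$, $v \in M_{1,1}$, $v \in B_{2,1}$, $v \in M_{2,1}$, $v \in B_{1,2}$, $v \in M_{1,2}$). For each gadget, I would first exhibit the clean pairing $\Pi(v)$ by inspection of the edge structure: the natural candidates are to pair each input pair $\{p,q\}$ with itself (which is clean by definition), each output pair $\{p,q\}$ with itself, and then to pair up the interior vertices ($x_i$'s, $y_j$'s, $z_k$'s) in a way that covers every remaining edge. Since each edge of a gadget typically contains either a full input/output pair or two interior vertices from a small local cluster, such a matching can be read off directly from the edge list of each gadget.

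To construct $\Pi(v,z)$ for $z \in V(H_v)$, I would start from the pairing $\Pi(v)$ and perform a local swap around the pair of $\Pi(v)$ containing $z$. Specifically, if $\Pi(v)$ pairs $z$ with some vertex $z'$, then I would re-pair $z'$ with another vertex $z''$ that (a) is currently paired in $\Pi(v)$ with a vertex $z'''$ which itself has an available partner, so that by ``shifting'' pairs along a short augmenting path one frees up $z$ without breaking coverage. In most cases a single swap suffices, because the interior of each gadget is small and symmetric; the construction of each $\Pi(v)$ would be chosen with this flexibility in mind. When $z$ is a non-input vertex the resulting $\Pi(v,z)$ will still only use clean pairs, because all substituted partners are either interior vertices of $H_v$ or members of a pair that was already output-paired.

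The genuinely new ingredient is the case where $z$ is an \emph{input} vertex of $H_v$, in which case the clean pairing $\Pi(v)$ uses $\{z, \text{twin}(z)\}$ as an input pair, and removing $z$ leaves $\text{twin}(z)$ orphaned. Here I would pair $\text{twin}(z)$ with some well-chosen interior vertex $x$ of $H_v$, and then re-pair the previous partner of $x$ either by another swap or by leveraging one of the free vertices created by dropping the input pair. The key observation is that once $\{z, \text{twin}(z)\}$ is removed, every edge of $H_v$ which was covered by that pair must now be covered through its intersection with the interior of $H_v$; this is exactly where the ``$\text{twin}(z)$ paired with an interior vertex'' pair will come in, making it the unique mixed pair of the resulting pairing.

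The main obstacle is purely bookkeeping: verifying, edge by edge, that every edge of $H_v$ is indeed covered by at least one pair in each of the (up to $|V(H_v)|+1$) constructed pairings. This is tedious because of the number of gadgets, the number of choices of $z$, and the fact that the $B_{1,2}$ and $M_{1,2}$ gadgets are the largest. I would organise this verification as a single table per gadget, listing the edges along rows and the pairings along columns, marking in each cell which pair of that pairing is contained in that edge. Provided the $\Pi(v)$'s are chosen with some redundancy — i.e.\ each edge contains two different pairs of $\Pi(v)$ whenever possible — the local swaps used to build $\Pi(v,z)$ will automatically preserve coverage, and the whole claim then reduces to inspection of these tables.
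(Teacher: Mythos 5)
Your outline matches the paper's approach in spirit---the proof there is exactly a case analysis over the six gadget types with an explicit list of pairings $\Pi(v)$ and $\Pi(v,z)$ for each relevant $z$, followed by the remark that the required properties are checked by inspection. However, your proposal never actually produces a single one of these pairings: the entire content of this claim \emph{is} the explicit list, since the statement is a finite assertion about specific hypergraphs whose edge sets are fixed by the construction. A description of how one \emph{would} find the pairings (``read off from the edge list'', ``local swaps along a short augmenting path'', ``organise the verification as a table'') is a plan, not a proof; as written there is nothing to check.

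Beyond that, one of your mechanisms would fail as described. Consider $z$ an \emph{output} vertex of $H_v$, say $z=p_b$. This is a junction vertex but not an input vertex of $H_v$, so the second bullet of the claim applies and $\Pi(v,p_b)$ must be \emph{clean}. Your recipe orphans the twin $q_b$ and (per your input-vertex discussion, the only rescue you describe for an orphaned junction vertex) would pair $q_b$ with an interior vertex---but that pair is mixed by definition, since a clean pair containing a junction vertex must be the junction pair itself. The paper instead drops the pair $\{p_b,q_b\}$ entirely and re-covers the affected edges using only interior vertices (e.g.\ $\Pi(v,p_b)=\{\{p_a,q_a\},\{x_1,y_3\}\}$ in the $B_{1,1}/M_{1,1}$ case), which is possible only because the gadgets are designed so that every edge through an output pair also contains two interior vertices. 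This is precisely the kind of structural fact that cannot be inferred from a generic swapping argument and must be verified against the actual gadgets; your write-up does not engage with it.
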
 

\begin{proof}

We refer to Tables \ref{tab:tableau1}--\ref{tab:tableau2}--\ref{tab:tableau3} for the names of the vertices in $H_v$ and arcs incident to $v$. Some cases will be omitted thanks to symmetries: for example, in the case $v \in M_{1,2}$, the pairing $\Pi(v,x_2)$ is defined analogously to the pairing $\Pi(v,x_1)$. 
    \begin{enumerate}[label={(\arabic*)}]
        \item Case $v \in B_{1,1} \cup M_{1,1}$. We define:
        \\ $\Pi(v) = \{ \{p_a,q_a\} , \{p_b,q_b\} \}$;
        \\ $\Pi(v,x_1) = \Pi(v,y_1) = \Pi(v,y_2) = \Pi(v,y_3) = \Pi(v)$;
        \\ $\Pi(v,p_a) = \{ \{q_a,y_2\} , \{p_b,q_b\} , \{x_1,y_1\} \}$;
        \\ $\Pi(v,p_b) = \Pi(v,q_b) = \{ \{p_a,q_a\} , \{x_1,y_3\} \}$.
        \item Case $v \in B_{1,2}$. We define:
        \\ $\Pi(v) = \{ \{p_a,q_a\} , \{p_b,q_b\} , \{p_c,q_c\} , \{x_1,y_3\} , \{x_2,y_4\} \}$;
        \\ $\Pi(v,x_3) = \Pi(v,y_1) = \Pi(v,y_5) = \Pi(v)$;
        \\ $\Pi(v,x_1) = \{ \{p_a,q_a\} , \{p_b,q_b\} , \{p_c,q_c\} , \{x_3,y_3\} , \{x_2,y_4\} \}$;
        \\ $\Pi(v,y_3) = \{ \{p_a,q_a\} , \{p_b,q_b\} , \{p_c,q_c\} , \{x_1,x_3\} , \{x_2,y_4\} \}$;
        \\ $\Pi(v,p_a) = \{ \{q_a,y_1\} , \{p_b,q_b\} , \{p_c,q_c\} , \{x_1,y_3\} , \{x_2,y_2\} , \{x_3,y_4\} \}$;
        \\ $\Pi(v,p_b) = \Pi(v,q_b) = \{ \{p_a,q_a\} , \{p_c,q_c\} , \{x_1,y_3\} , \{x_2,y_4\} , \{x_3,y_5\} \}$
        \\(Note that, in that last pairing, the pair $\{p_c,q_c\}$ is actually redundant. We add it out of principle, so that every pairing includes all allowed junction pairs.)
        \item Case $v \in M_{1,2}$. We define: 
        \\ $\Pi(v) = \{ \{p_a,q_a\} , \{p_b,q_b\} , \{p_c,q_c\} , \{x_1,z_1\} , \{x_2,z_2\} \}$;
        \\ $\Pi(v,y_1) = \Pi(v,y_3) = \Pi(v)$;
        \\ $\Pi(v,x_1) = \{ \{p_a,q_a\} , \{p_b,q_b\} , \{p_c,q_c\} , \{y_3,z_1\} , \{x_2,z_2\} \}$;
        \\ $\Pi(v,z_1) = \{ \{p_a,q_a\} , \{p_b,q_b\} , \{p_c,q_c\} , \{x_1,y_3\} , \{x_2,z_2\} \}$;
        \\ $\Pi(v,p_a) = \{ \{q_a,y_1\} , \{p_b,q_b\} , \{p_c,q_c\} , \{x_1,z_1\} , \{x_2,y_2\} , \{y_4,z_2\} \}$;
        \\ $\Pi(v,p_b) = \Pi(v,q_b) = \{ \{p_a,q_a\} , \{p_c,q_c\} , \{x_1,y_3\} , \{x_2,z_2\} , \{z_1,y_b\} \}$.
        \item Case $v \in B_{2,1}$. We define:
        \\ $\Pi(v) = \{ \{p_a,q_a\} , \{p_b,q_b\} , \{p_c,q_c\} , \{x_1,x_2\} \}$;
        \\ $\Pi(v,x_3) = \Pi(v,y_1) = \Pi(v,y_2) = \Pi(v,y_3) = \Pi(v,y_4) = \Pi(v,y_5) = \Pi(v)$;
        \\ $\Pi(v,x_1) = \{ \{p_a,q_a\} , \{p_b,q_b\} , \{p_c,q_c\} , \{x_2,z_1\} , \{z_2,z_3\} \}$;
        \\ $\Pi(v,x_2) = \{ \{p_a,q_a\} , \{p_b,q_b\} , \{p_c,q_c\} , \{x_1,z_1\} , \{z_2,z_3\} \}$;
        \\ $\Pi(v,p_a) = \{ \{q_a,y_2\} , \{p_b,q_b\} , \{p_c,q_c\} , \{x_1,y_1\} , \{x_2,z_1\} , \{z_2,z_3\} \}$;
        \\ $\Pi(v,p_c) = \Pi(v,q_c) = \{ \{p_a,q_a\} , \{p_b,q_b\} , \{x_1,z_1\} , \{z_2,z_3\}, \{x_2,y_3\} \}$.
        \item Case $v \in M_{2,1}$. We define:
        \\ $\Pi(v) = \{ \{p_a,q_a\} , \{p_b,q_b\} , \{p_c,q_c\} , \{x_1,x_2\} \}$;
        \\ $\Pi(v,y_1) = \Pi(v,y_2) = \Pi(v,z_1) = \Pi(v,z_2) = \Pi(v,z_3) = \Pi(v,z_a) = \Pi(v,z_b) = \Pi(v)$;
        \\ $\Pi(v,x_1) = \{ \{p_a,q_a\} , \{p_b,q_b\} , \{p_c,q_c\} , \{x_2,z_1\} , \{z_2,z_3\} \}$;
        \\ $\Pi(v,x_2) = \{ \{p_a,q_a\} , \{p_b,q_b\} , \{p_c,q_c\} , \{x_1,z_1\} , \{z_2,z_3\} \}$;
        \\ $\Pi(v,p_a) = \{ \{q_a,y_1\} , \{p_b,q_b\} , \{p_c,q_c\} , \{x_1,z_a\} , \{x_2,z_1\} , \{z_2,z_3\} \}$;
        \\ $\Pi(v,p_c) = \Pi(v,q_c) = \{ \{p_a,q_a\} , \{p_b,q_b\} , \{x_1,z_1\} , \{z_2,z_3\}, \{x_2,y_2\} \}$.
    \end{enumerate}
\noindent It is easily checked that these pairings have all the desired properties.
\end{proof}

\subsection{Properties of regular play}\strut 
\indent The next two claims describe the state of the different gadgets throughout regular play.

\begin{claim}\label{cla:regular}
    Consider the situation at the end of a sequence of regular play, assuming that both players have followed regular play from the start. Then, Maker has not yet filled an edge. Moreover, for all $t \in N$, denoting by $H_t'$ the updated version of the gadget $H_t$:
    \begin{enumerate}[label={\textup{(\arabic*)}}]
        \item If $t$ has never been active, then the gadget is intact: $H_t'=H_t$. \label{item1}
        \item If $t$ is not active but has been active before, and $t \not\in M_{2,1}$, then $E(H_t')=\varnothing$. \label{item2}
        \item If $t$ is not active but has been active before, and $t \in M_{2,1}$, then, writing $g=\overrightarrow{rt} \in A$ where $r$ was not active immediately prior to $t$, we have $E(H_t')=\{\{p_g,q_g,z_g\}\}$. \label{item3}
        \item If $t$ is active for the first time, then, writing $g=\overrightarrow{rt} \in A$ where $r$ was active immediately prior to $t$, all vertices of $H_t$ are unpicked apart from $p_g$ and $q_g$ which have been picked by Maker. \label{item4}
        \item If $t$ is active for the second time and $t \in B_{2,1}$, then $E(H_t')=\varnothing$. \label{item5}
        \item If $t$ is active for the second time and $t \in M_{2,1}$, then, writing $g=\overrightarrow{rt} \in A$ where $r$ was active immediately prior to $t$, we have $E(H_t')=\{\{z_g\}\}$. \label{item6}
    \end{enumerate}
\end{claim}

\begin{proof}
    We prove the claim by induction on the number of sequences of regular play that have been played.
    \begin{itemize}
        \item First of all, consider the situation at the end of the first sequence of regular play, which takes place inside the gadget $H_s$. The node that was active before that sequence is $s \in B_{0,1}$, and the active node is $w$ where $w$ denotes the only out-neighbor of $s$. Let $a=\overrightarrow{sw} \in A$. Breaker has picked $y_1$ and $y_2$, hence killing all edges of $H_s$, and leaving all other gadgets intact since $y_1$ and $y_2$ are interior vertices of $H_s$. Meanwhile, Maker has picked $p_a$ and $q_a$, which are input vertices of the gadget $H_w$. Therefore, $s$ satisfies item \ref{item2} of the claim and $w$ satisfies item \ref{item4} of the claim, while all nodes other than $s$ and $w$ have intact updated gadgets and thus satisfy item \ref{item1} of the claim. Finally, Maker has clearly not filled an edge yet.
        \item Now, consider the situation at the end of a sequence $\sequence$ of regular play which was not the first one. Suppose that the claim was satisfied before the sequence $\sequence$: we want to show that it still is. There exists $e=\overrightarrow{vw} \in A$ such that $w$ is the active node and $v$ was active immediately prior. Since $\sequence$ only involved interior vertices of $H_v$ as well as the junction vertices $p_e$ and $q_e$, it has left unchanged all updated gadgets $H'_t$ such that $t \not\in \{v,w\}$: by the induction hypothesis, those nodes satisfy the claim. It remains to be shown that $v$ and $w$ also do, and that Maker has not yet filled an edge.
        \begin{itemize}
            \item[--] Let us first show that $v$ satisfies the claim. Note that $v \neq s$ since $\sequence$ is not the first sequence of regular play. In particular, we can define $f=\overrightarrow{uv} \in A$ where $u$ was active immediately prior to $v$. Recall that, by item \ref{item4} of the induction hypothesis applied to $v$, before the sequence $\sequence$, $p_f$ and $q_f$ had already been picked by Maker and all other vertices of $H_v$ were unpicked.
            
            If $v \not\in M_{2,1}$ then it is easy to see that, in all cases, the sequence $\sequence$ has Breaker killing all edges of $H_v$, i.e. the set of vertices picked by Breaker during the sequence $\sequence$ intersects every edge of $H_v$. Therefore, $v$ satisfies item \ref{item2} of the claim.
            
            If $v \in M_{2,1}$, then, writing $g=\overrightarrow{rv} \in A$ where $r$ was not active immediately prior to $v$, we see that the sequence $\sequence$ has the players picking all remaining vertices of $H_v$ apart from $p_g$, $q_g$ and $z_g$ and has Breaker killing all edges of $H_v$ apart from the edge $\{p_g,q_g,x_1,z_g\}$. Since Maker has picked $x_1$, this yields the edge $\{p_g,q_g,z_g\}$ in the updated hypergraph, so $v$ satisfies item \ref{item3} of the claim.
            \item[--] We have just seen that Breaker's picks during the sequence $\sequence$ kill all edges of $H_v$, except for a single edge which is not filled by Maker anyway when $v \in M_{2,1}$. The only vertices involved in the sequence $\sequence$ that could have impacted other gadgets are $p_e$ and $q_e$, which have been picked by Maker, but the induction hypothesis ensures that the updated hypergraph before the sequence $\sequence$ did not have $\{p_e,q_e\}$ (or any subset of $\{p_e,q_e\}$) as an edge. All in all, Maker has not yet filled an edge.
            \item[--] Now, let us show that $w$ satisfies the claim.
            If $w$ is active for the first time, then item \ref{item1} of the induction hypothesis applied to $w$ ensures that its gadget was intact before the sequence $\sequence$, and the only difference after the sequence $\sequence$ is that $p_e$ and $q_e$ have been picked by Maker: this means that $w$ satisfies item \ref{item4} of the claim.
            If $w$ is active for the second time and $w \in B_{2,1}$, then item \ref{item2} of the induction hypothesis applied to $w$ ensures that its updated gadget before the sequence $\sequence$ already had no edge, so $w$ satisfies item \ref{item5} of the claim.
            Finally, if $w$ is active for the second time and $w \in M_{2,1}$, then item \ref{item3} of the induction hypothesis applied to $w$ ensures that its updated gadget before the sequence $\sequence$ had $\{p_e,q_e,z_e\}$ as its only edge: since $p_e$ and $q_e$ have now been picked by Maker, the current updated gadget has $\{z_e\}$ as its only edge, so $w$ satisfies item \ref{item6} of the claim. This ends the proof. \qedhere
        \end{itemize}
    \end{itemize}
\end{proof}

\begin{claim}\label{cla:pairing2}
    Consider the situation at the end of a sequence of regular play, assuming that both players have followed regular play from the start. For every $t \in N$ which is not the active node, the updated gadget $H'_t$ admits all the following pairings:
    \begin{itemize}[noitemsep,nolistsep]
        \item a clean pairing $\Pi'(t)$;
        \item for each vertex $z \in V(H_t')$ which is not an input vertex of $H_t$: a clean pairing $\Pi'(t,z)$ which does not use $z$;
        \item for each vertex $z \in V(H_t')$ which is an input vertex of $H_t$: a pairing $\Pi'(t,z)$ which does not use $z$ and whose only mixed pair, if any, contains the twin of $z$ paired with an interior vertex.
    \end{itemize}
\end{claim}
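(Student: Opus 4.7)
The plan is to apply Claim \ref{cla:regular} as a structural dictionary: it describes exactly what the updated gadget $H'_t$ looks like in every scenario where $t$ is not the active node. For each such scenario, I will either import the required pairings from Claim \ref{cla:pairing} or exhibit explicit tiny pairings by hand. Since the cases of Claim \ref{cla:regular} are mutually exclusive and exhaustive (among non-active nodes they correspond to items \ref{item1}, \ref{item2}, \ref{item3_1}, \ref{item3_2}, \ref{item4} and \ref{item5}), this gives a complete case analysis.

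First I would handle item \ref{item1}: the gadget is intact, $H'_t = H_t$, so the pairings $\Pi'(t) := \Pi(t)$ and $\Pi'(t,z) := \Pi(t,z)$ from Claim \ref{cla:pairing} work verbatim. Next I would treat items \ref{item2}, \ref{item3_1}, \ref{item3_2} and \ref{item4} together: in all of these, $E(H'_t) = \varnothing$, so the empty pairing $\Pi'(t) := \varnothing$ vacuously satisfies the pairing condition, and for every $z \in V(H'_t)$ the empty pairing $\Pi'(t,z) := \varnothing$ also does not use $z$ and is (vacuously) clean. One only needs to observe that in items \ref{item3_1}, \ref{item3_2} and \ref{item4}, the surviving vertices $p_g, q_g$ (and $z_g$, if present) are either output-junction/interior vertices (in cases \ref{item3_1}, \ref{item3_2}) or input-junction vertices (case \ref{item4}), and in each situation the empty pairing trivially matches the requested format.

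The only case that requires a genuine pairing is item \ref{item5}, where $t \in M_{2,1}$ has been visited via an arc $g = \overrightarrow{rt}$, with $V(H'_t) = \{p_g, q_g, z_g\}$ and $E(H'_t) = \{\{p_g, q_g, z_g\}\}$. I would set
\[
\Pi'(t) := \{\{p_g, q_g\}\},
\]
which is clean (it is a junction pair) and intersects the unique surviving edge. For $z = z_g$, which is an interior vertex, the same pairing $\Pi'(t, z_g) := \{\{p_g, q_g\}\}$ does not use $z_g$ and is clean. For $z = p_g$, which is an input vertex of $H_t$ (since $g$ is the unused in-arc), I would set
\[
\Pi'(t, p_g) := \{\{q_g, z_g\}\};
\]
the twin of $p_g$ is $q_g$, and it is paired with the interior vertex $z_g$, which is exactly the required format for the mixed pair. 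Symmetrically, $\Pi'(t, q_g) := \{\{p_g, z_g\}\}$ works for $z = q_g$.

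The proof is essentially bookkeeping, so there is no real obstacle; the only mild care needed is in item \ref{item5}, where one must check that the mixed pair satisfies the precise twin-and-interior-vertex condition demanded by the third bullet of the claim. Everything else reduces to a direct quotation of Claim \ref{cla:pairing} or to the vacuous triviality of pairing an edgeless hypergraph.
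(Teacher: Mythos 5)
Your proposal is correct and follows essentially the same route as the paper: quote Claim \ref{cla:pairing} verbatim for intact gadgets, use the empty pairing for the edgeless cases, and hand-build the pairings $\{\{p_g,q_g\}\}$, $\{\{q_g,z_g\}\}$, $\{\{p_g,z_g\}\}$ for the surviving $M_{2,1}$ edge. The paper's proof is exactly this case analysis, including the same observation that the mixed pair $\{q_g,z_g\}$ is permitted because $p_g$ is an input vertex and $z_g$ is interior.
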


\begin{proof}
    If $t$ has never been active, then $H'_t=H_t$ by item \ref{item1} of Claim \ref{cla:regular}. Therefore, we can simply use the pairings given by Claim \ref{cla:pairing}: we define $\Pi'(t) = \Pi(t)$ and $\Pi'(t,z)=\Pi(t,z)$ for all $z \in V(H_t)$. If $t$ has been active before and $t \not\in M_{2,1}$, then $E(H'_t)=\varnothing$ by item \ref{item2} of Claim \ref{cla:regular}, so we can define $\Pi'(t) = \varnothing$ and $\Pi'(t,z)=\varnothing$ for all $z \in V(H'_t)$, which trivially are clean pairings. Finally, if $t$ has been active before and $t \in M_{2,1}$, then item \ref{item3} of Claim \ref{cla:regular} ensures that, writing $g=\overrightarrow{rt} \in A$ where $r$ was not active immediately prior to $t$, we have $E(H_t')=\{\{p_g,q_g,z_g\}\}$. Therefore, we can define $\Pi'(t)=\Pi'(t,z_g)=\{\{p_g,q_g\}\}$, $\Pi'(t,p_g)=\{\{q_g,z_g\}\}$ (the pair is mixed, but it is allowed since $p_g$ is an input vertex of $H_t$ and $z_g$ is an interior vertex) and $\Pi'(t,q_g)=\{\{p_g,z_g\}\}$.
\end{proof}

We now show that regular play always leads to the desired result.

\begin{claim}\label{cla:result}
    Assume that both players are constrained to follow regular play until it ends.
    \begin{itemize}
        \item If Alice has a winning strategy for the Geography game on $(N,A,s)$, then Maker has a winning strategy for the Maker-Breaker game on $H$ where she wins on the first move succeeding regular play.
        \item If Bob has a winning strategy for the Geography game on $(N,A,s)$, then Breaker has a winning strategy for the Maker-Breaker game on $H$ thanks to a (clean) pairing of the updated hypergraph obtained at the end of regular play.
    \end{itemize}
\end{claim}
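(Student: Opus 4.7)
The plan is to exploit the one-to-one correspondence between sequences of regular play and Geography moves: each sequence takes place inside $H_v$ where $v$ is the active node, and ends by activating an out-neighbor of $v$, matching exactly one Geography move out of $v$. Since $(N,A)$ is bipartite with $s \in B_{0,1}$ and Bob plays first, an easy induction on the color shows that at every $B$-colored active node the next choice corresponds to Bob's move (made by Breaker at $B_{1,2}$, forced at $B_{0,1}, B_{1,1}, B_{2,1}$), while at every $M$-colored active node it corresponds to Alice's move (made by Maker at $M_{1,2}$, forced at $M_{1,1}, M_{2,1}$). Regular play therefore halts precisely at the losing Geography move; moreover each sequence has even length and ends on a Breaker move, so Maker is to play immediately after regular play ends.

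Suppose first that Alice has a winning Geography strategy, and let Maker follow it at $M_{1,2}$ gadgets. Then Bob makes the losing move, going from a $B$-colored to an $M$-colored node, so the revisited node $t$ lies in $M_{2,1}$. By item (\ref{item8}) of Claim \ref{cla:regular}, at the close of regular play we have $V(H_t')=\{z_g\}$ and $E(H_t')=\{\{z_g\}\}$, where $g$ is the arc of re-entry. Maker, being to play, wins at once by picking $z_g$, completing the original edge $\{p_g,q_g,z_g,x_1\}\in E(H_t)$.

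Suppose instead that Bob has a winning Geography strategy, and let Breaker follow it at $B_{1,2}$ gadgets. A symmetric color analysis places the revisited node $t$ in $B_{2,1}$, so item (\ref{item7}) of Claim \ref{cla:regular} gives $V(H_t')=E(H_t')=\varnothing$. For each $u\in N$, Claim \ref{cla:pairing2} supplies a clean pairing $\Pi'(u)$ of $H_u'$, and the plan is to set $\Pi := \bigcup_{u \in N} \Pi'(u)$ (treating any junction pair $\{p_g,q_g\}$ that occurs in two gadgets as a single element) and invoke Lemma \ref{lem:pairing}. The only non-routine verification, and arguably the main obstacle, is that $\Pi$ is a bona fide pairing of the updated $H$: edge coverage is automatic since every updated edge belongs to a unique $H_u'$ and is covered by $\Pi'(u)$, while disjointness of pairs relies crucially on cleanness — a junction vertex $p_g$ appears only in the two gadgets containing $\{p_g,q_g\}$, and in each of these it is either unpaired or paired with its twin $q_g$, which rules out a junction vertex lying in two distinct pairs; interior vertices lie in a unique gadget and raise no issue.
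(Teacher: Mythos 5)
Your proof is correct and follows essentially the same route as the paper: the fictitious-Geography correspondence via the bipartition, item \ref{item8} of Claim \ref{cla:regular} for Maker's one-move win when Bob re-enters an $M_{2,1}$ node, and item \ref{item7} plus the union of the clean pairings from Claim \ref{cla:pairing2} (valid by disjointness of clean pairs, which the paper argues via the ``puzzle pieces'' observation) combined with Lemma \ref{lem:pairing} when Alice re-enters a $B_{2,1}$ node. The only cosmetic difference is that you take the union over all of $N$ rather than excluding the re-entered node, which is harmless since its updated gadget is empty.
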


\begin{proof}
    We shall rename Maker as Alice and Breaker as Bob for this proof. First, suppose that Alice has a winning strategy $\strat$ for the Geography game on $(N,A,s)$. While playing the Maker-Breaker game on $H$, Alice plays a fictitious Geography game on $(N,A,s)$, and goes back and forth between the two games as follows, in a way that ensures that the node hosting the token in the Geography game is always the same as the active node of regular play in the Maker-Breaker game. First, she places the token on $s$ in the Geography game, as she must. After that:
    \begin{itemize}
        \item[--] When it is Bob's turn in the Geography game (i.e. the token is placed on a node $v$ with same color as $s$), Alice initiates the sequence of regular play inside the gadget $H_v$ in the Maker-Breaker game. Note that, since $v \in B_{0,1} \cup B_{1,1} \cup B_{2,1} \cup B_{1,2}$, Alice has no choice to make during that sequence. After the sequence is over, she takes note of the new active node $w$, and considers that Bob has moved the token from $v$ to $w$ in the Geography game.
        \item[--] When it is Alice's turn in the Geography game (i.e. the token is placed on a node $v$ with different color than $s$), Alice moves the token to the node $w$ suggested by the strategy $\strat$ in the Geography game. She then initiates the sequence of regular play inside the gadget $H_v$ in the Maker-Breaker game, so that the new active node is $w$. Note that, since $v \in M_{1,1} \cup M_{2,1} \cup M_{1,2}$, Alice can indeed choose which of the two out-neighbors of $v$ to activate if $v$ has out-degree 2.
    \end{itemize}
    Since $\strat$ is a winning strategy, Bob will at some point lose the Geography game by moving the token to a node $w$ (necessarily of different color than $s$) that had already been visited. This means that, in the Maker-Breaker game, the node $w$ has been activated for a second time, thus ending the phase of regular play. Since $w$ has been visited twice, it must have in-degree 2, so $w \in M_{2,1}$. Therefore, according to item \ref{item6} of Claim \ref{cla:regular}, the updated hypergraph contains an edge of size 1 at this point, so Alice can win in one move by picking that vertex.
    \\ Now, suppose that Bob has a winning strategy $\strat$ for the Geography game on $(N,A,s)$. We use an analogous reasoning to the above, with Bob converting Alice's choices in the Maker-Breaker game into moves in the Geography game, while echoing his own moves in the Geography game (suggested by $\strat$) into the Maker-Breaker game. Since $\strat$ is a winning strategy, Alice will at some point lose the Geography game by moving the token to a node $w$ (necessarily of same color as $s$) that had already been visited. This means that, in the Maker-Breaker game, the node $w$ has been activated for a second time, thus ending the phase of regular play. Since $w \in B_{2,1}$, we know by item \ref{item5} of Claim \ref{cla:regular} that the updated gadget of $w$ has no edge. For every other updated gadget, Claim \ref{cla:pairing2} provides a clean pairing $\Pi'(\cdot)$, so the full updated hypergraph admits the following clean pairing:
    $$ \mathbf{\Pi} = \bigcup_{t \in V \setminus \{w\}} \Pi'(t).$$
    Therefore, Bob has a winning strategy from this point on by Lemma \ref{lem:pairing}.
\end{proof}

\subsection{Optimality of regular play}\strut 
\indent We now verify that regular play is optimal for both players.

\begin{claim}\label{cla:optimalgreedy}
    Assume that both players have followed regular play thus far, and that regular play is not over. Suppose it is Maker's turn, and regular play suggests a move labelled ``g'' (resp. ``2-g'', resp. ``3-g'') in Tables \ref{tab:tableau1}--\ref{tab:tableau2}--\ref{tab:tableau3}. Then, that move is indeed a 1-greedy move (resp. 2-greedy move, resp. 3-greedy move) as per Definition \ref{def:greedy}. 
    More specifically: any Breaker answer suggested (resp. not suggested) by regular play falls under item (ii) (resp. item (i)) of Definition \ref{def:greedy}.
\end{claim}

\begin{proof}
	We detail what happens for the gadgets $H_s$ and $H_v$ when $v \in B_{1,2} \cup B_{2,1}$. These three cases encapsulate all types of situations that arise across all gadgets.
	
	\begin{itemize}
	
	\item Let us start with $H_s$ (recall the figure from Table \ref{tab:tableau1}). Regular play suggests that Maker starts the game by picking $p_a$ (labelled ``3-g'') and that Breaker responds by picking $y_1$. Clearly, every edge of $H$ containing $y_1$ also contains $p_a$. Therefore, it suffices to show that, if Maker picks $p_a$ and Breaker answers by picking some $y \neq y_1$, then Maker wins in three more moves. This is due to the fact that, after Breaker picks $y$, Maker can pick $y_1$ herself. Breaker then picks some $y'$. Between three and five of the vertices $z_1,z_2,z_3,z_4,z_5$ are unpicked at this point (depending on whether $y$ and $y'$ are among them or not), so Maker can pick two arbitrary vertices among $z_1,z_2,z_3,z_4,z_5$ with her next two moves, thus filling an edge. The same argument works for Maker's second move of the game ($q_a$, also labelled ``3-g'') and Breaker's answer $y_2$.
	
	\item Let us now consider an active node $v \in B_{1,2}$ (recall the figure from Table \ref{tab:tableau2}). Before the sequence of regular play on $H_v$ begins, Claim \ref{cla:regular} ensures that $p_a$ and $q_a$ have been picked by Maker while all other vertices of $H_v$ are unpicked. We go through the sequence of regular play in order.
    \begin{itemize}
        \item[--] Before the sequence starts, there is an updated edge $\{x_1,y_1\}$ (yielded by the original edge $\{p_a,q_a,x_1,y_1\}$). Therefore, if Maker picks $x_1$ and Breaker does not pick $y_1$ as an answer, then Maker wins in one more move by picking $y_1$ herself. Moreover, that updated edge $\{x_1,y_1\}$ is the only one containing $y_1$, and it contains $x_1$ as well. All in all, $x_1$ is indeed a 1-greedy move, which forces Breaker to pick $y_1$.
        \item[--] The case of the 1-greedy move $x_2$ and Breaker's answer $y_2$ is analogous. Note that Maker could actually have made the 1-greedy moves $x_1$ and $x_2$ is any order, as both were available at the start of the sequence.
        \item[--] After these four moves, regular play suggests that Maker picks $x_3$ and Breaker picks $y_3$ or $y_4$. We skip this part for now since this case is not covered by the present claim ($x_3$ is not a greedy move). By symmetry, assume that Breaker has picked $y_4$.
        \item[--] The original edge $\{p_b,x_1,x_3,y_3\}$ has now yielded the edge $\{p_b,y_3\}$ in the updated hypergraph since Maker has picked $x_1$ and $x_3$. Moreover, this is the only updated edge containing $y_3$: indeed, Breaker killed the edge $\{x_1,x_2,y_3,y_4\}$ by picking $y_4$ previously. Therefore, $p_b$ is a 1-greedy move, which forces Breaker to pick $y_3$.
        \item[--] The final two moves of the sequence are not covered by the present claim as $q_b$ is not a greedy move.
    \end{itemize}
    
    \item Finally, let us consider an active node $v \in B_{2,1}$ (recall the figure from Table \ref{tab:tableau3}). The 1-greedy moves are straightforward using analogous arguments to the previous case. We now explain the 2-greedy move $x_2$, which happens after four moves have been made in the current sequence of regular play. Clearly, all edges of $H$ containing $z_1$, $z_2$ or $z_3$ also contain $x_2$. Therefore, it suffices to show that, if Maker picks $x_2$ and Breaker picks some $y \not\in \{z_1,z_2,z_3\}$, then Maker can win in two more moves. This comes from the fact that, after Maker picks $x_2$, there are updated edges $\{z_1,z_2\}$ and $\{z_2,z_3\}$ (yielded by the original edges $\{x_1,x_2,z_1,z_2\}$ and $\{x_1,x_2,z_2,z_3\}$, in which Maker has picked $x_1$ and $x_2$): if Breaker picks $y \not\in \{z_1,z_2,z_3\}$, then Maker can pick $z_2$ herself and win on her next move by picking $z_1$ or $z_3$. \qedhere
	\end{itemize}
\end{proof}

\begin{claim}\label{cla:optimalbreaker}
	Assume that both players have followed regular play thus far, and that regular play is not over. If it is Breaker's turn, then it is optimal for Breaker to follow regular play with his next move. More specifically: Breaker violating regular play during the very first sequence of regular play (which takes place inside $H_s$) allows Maker to fill an edge in three more moves, while Breaker violating regular play later on allows Maker to fill an edge in two more moves.
\end{claim}

\begin{proof}

We distinguish three types of moves suggested by regular play for Breaker.

\begin{itemize}
	
	\item First, there are the moves that come in response to a greedy move by Maker. Those are already covered by Claim \ref{cla:optimalgreedy}.
	
	\item Second, there are some moves labelled ``f'' that do not come in response to a greedy move by Maker. It is easy to check that these are indeed forced moves. For an active node in $B_{1,2}$, when Breaker is supposed to pick $y_5$ as the final move of that sequence, that move is forced since Maker has already picked all the other vertices of the edge $\{p_b,q_b,x_3,y_5\}$ or $\{p_c,q_c,x_3,y_5\}$ depending on the scenario. For an active node in $M_{1,2}$, when Breaker is supposed to pick $z_1$ (resp. $z_2$) after Maker has picked $z_2$ (resp. $z_1$), that move is forced since Maker has already picked all the other vertices of the edge $\{x_1,x_2,z_1,z_2\}$.
	
	\item Finally, there is the case of an active node $v \in B_{1,2}$, as Breaker chooses between picking $y_3$ or $y_4$. This choice happens after five moves have been made in the current sequence, with the updated gadget being pictured in Figure \ref{fig:breakerchoice}.
  
\begin{figure}[h]
\centering
\includegraphics[scale=.55]{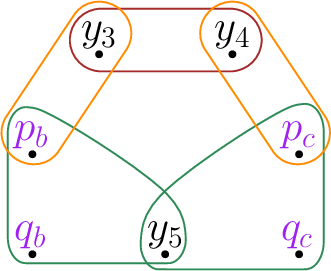}
\caption{The gadget $H_v$ for $v \in B_{1,2}$, as updated after the first five moves of the sequence of regular play on $H_v$.}\label{fig:breakerchoice}
\end{figure}

We can see that Breaker must pick one of $y_3$, $y_4$ or $p_b$, otherwise Maker can win in two moves by picking $y_3$ and then $y_4$ or $p_b$. Similarly, Breaker must pick one of $y_3$, $y_4$ or $p_c$, otherwise Maker can win in two moves by picking $y_4$ and then $y_3$ or $p_b$. These two facts combined ensure that Breaker must indeed pick either $y_3$ or $y_4$ to avoid losing after two more Maker moves. \qedhere
\end{itemize}
\end{proof}

\begin{claim}\label{cla:optimalmaker}
    Assume that both players have followed regular play thus far, and that regular play is not over. If it is Maker's turn, then it is optimal for Maker to follow regular play with her next move. More specifically: in all situations where regular play suggests a move labelled ``g'' (resp. ``2-g'', resp. ``3-g'') in Tables \ref{tab:tableau1}--\ref{tab:tableau2}--\ref{tab:tableau3}, then that move is indeed a 1-greedy move (resp. 2-greedy move, resp. 3-greedy move) as per Definition \ref{def:greedy}, whereas in all other situations, violating regular play yields a losing position for Maker.
\end{claim}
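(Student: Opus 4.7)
The plan is to prove the two assertions of the claim separately, matching the structure of the statement.

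\textbf{Greedy moves (Part 1).} For each Maker move $x$ labelled ``g'' in Tables \ref{tab:tableau1} and \ref{tab:tableau2}, followed by Breaker's forced response $y$, I would check Lemma \ref{lem:greedy}'s hypotheses directly. Using Claim \ref{cla:regular} to read off the vertices picked so far in the current sequence, verify that (i) $\{x,y\}$ is an updated edge, coming from a size-4 edge of $H_v$ whose two other vertices are already picked by Maker with none picked by Breaker, and (ii) every other updated edge containing $y$ also contains $x$. Both verifications reduce to finite direct checks against the edge lists of the gadgets. Lemma \ref{lem:greedy} then delivers optimality of the prescribed round.

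\textbf{Deviations in non-greedy situations (Part 2).} Inspecting the tables, the non-``g'' Maker moves are $x_3$, $q_b$, $q_c$ in $B_{1,2}$-gadgets and $z_1$, $z_2$ in $M_{1,2}$-gadgets. For each such state, I would enumerate every alternative Maker pick $z$ (including picks inside the active gadget, picks of junction vertices shared with a neighbouring gadget, and picks in completely unrelated gadgets), exhibit a Breaker response, and produce a pairing of the resulting updated hypergraph; Lemma \ref{lem:pairing} then yields a Breaker win, i.e.\ a losing position for Maker. The global pairing is assembled from three pieces: a bespoke pairing of the active gadget $H_v$ in its current mid-sequence state augmented by the two new picks; Claim \ref{cla:pairing2}'s clean pairing $\Pi'(t)$ for each non-active gadget $t$ untouched by the deviation; and Claim \ref{cla:pairing2}'s variant $\Pi'(w,z)$ for a neighbouring gadget $H_w$ whenever $z$ is a junction vertex shared with $H_w$.

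The main obstacle is that Claim \ref{cla:pairing2} provides pairings only at the \emph{end} of a sequence, while Maker's deviation occurs \emph{mid-sequence} in the active gadget, so for $H_v$ a tailor-made pairing of its current intermediate state must be constructed for each specific alternative $z$ and Breaker's chosen response. Once those mid-sequence pairings are in hand, global consistency across the union of all gadget-level pairings is automatic: the clean/mixed dichotomy of Claim \ref{cla:pairing2} guarantees that each junction vertex participates either as half of a junction pair inside a single gadget or, in the ``input'' variant, is matched with an interior vertex of the same gadget, so no junction vertex is claimed by two different pairs across the union.
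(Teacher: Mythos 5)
Your proposal matches the paper's proof in both structure and substance: the ``g'' moves are verified against Lemma \ref{lem:greedy} by direct inspection of the gadget edges, and each deviation from a non-greedy move is punished by a specific Breaker response followed by a global pairing assembled exactly as you describe, from an ad hoc mid-sequence pairing of the active gadget together with the pairings $\Pi'(t)$ and $\Pi'(t,\cdot)$ of Claim \ref{cla:pairing2}, with the clean/mixed distinction ensuring the pieces interlock. The only remaining work is the finite case analysis itself (the explicit Breaker answers and the explicit mid-sequence pairings for the three non-greedy situations), which is precisely what the paper's proof carries out.
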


\begin{proof}
	Claim \ref{cla:optimalgreedy} already covers all greedy moves, which we know to be optimal by Lemma \ref{lem:greedy}. Therefore, we only need to address Maker's moves which are not labelled ``g'', ``2-g'' or ``3-g''. 
    Looking at Tables \ref{tab:tableau1}--\ref{tab:tableau2}--\ref{tab:tableau3}, we can see there are three such situations. Case 1 is that of an active node $v \in M_{1,2}$, as Maker chooses between picking $z_1$ or $z_2$. Case 2 is that of an active node $v \in B_{1,2}$, when Maker picks $x_3$. Case 3 is that of an active node $v \in B_{1,2}$, when Maker picks $q_b$ or $q_c$. We show that, if Maker picks an irregular vertex, then Breaker has an answer after which the full updated hypergraph admits a pairing $\mathbf{\Pi}$, which concludes according to Lemma \ref{lem:pairing}. In each of these three cases, the pairing $\mathbf{\Pi}$ will be constructed using a union of the pairings $\Pi'(t)$ and $\Pi'(t,\cdot\,)$, given for each non-active node $t$ by Claim \ref{cla:pairing2} (applied to the situation obtained at the end of the previous sequence of regular play).
    
    \begin{enumerate}[label={(\arabic*)}]
    
        \item Case 1 is that of an active node $v \in M_{1,2}$, as Maker must choose between picking $z_1$ or $z_2$. The updated gadget of $v$ at this point is pictured in Figure \ref{fig:makerchoice1}. Suppose that Maker picks some $x \not\in \{z_1,z_2\}$ instead: we now show that this is a losing move. Let $w_b$ and $w_c$ be the out-neighbors of $v$ through the arcs $b$ and $c$ respectively.
        \begin{figure}[h]
        \centering
        \includegraphics[scale=.55]{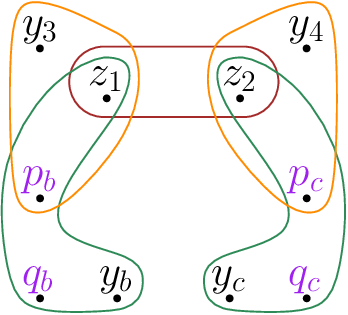}
        \caption{The gadget $H_v$ for $v \in M_{1,2}$, as updated after the first four moves of the sequence of regular play on $H_v$.}\label{fig:makerchoice1}
        \end{figure}
        \begin{enumerate}[label={(\alph*)}]
            \item First, suppose that $x \in \{y_3,y_4,y_b,y_c\}$. That move does not impact any other gadget since $x$ is an interior vertex. Now, Breaker picks $z_1$. Note that $\{\{p_c,z_2\}\}$ is a pairing for (the updated gadget of) $v$ after that move. We should not use $p_c$ in the pairing for $w_c$ since we already use it for $v$. We thus conclude by defining the pairing $\mathbf{\Pi}$ as:
            $$ \mathbf{\Pi} = \{\{p_c,z_2\}\} \cup \Pi'(w_c,p_c) \cup \bigcup_{t \in N \setminus \{v,w_c\}}\Pi'(t).$$
            \item Next, suppose that $x \in \{p_b,q_b,p_c,q_c\}$. By symmetry, assume that $x \in \{p_b,q_b\}$. Now, Breaker picks $z_1$. Note that $\{\{p_c,z_2\}\}$ is a pairing for $v$ after that move.  We should not use $p_c$ in the pairing for $w_c$ since we already use it for $v$, nor should we use $x$ in the pairing for $w_b$ since it has already been picked by Maker. We thus define:
            $$ \mathbf{\Pi} = \{\{p_c,z_2\}\} \cup \Pi'(w_b,x) \cup \Pi'(w_c,p_c) \cup \bigcup_{t \in N \setminus \{v,w_b,w_c\}}\Pi'(t).$$
            \item To end the proof of Case 1, suppose that $x \not\in V(H_v)$.

            \begin{itemize}
            	\item First, suppose that $x$ is an an interior vertex of some (necessarily unique) gadget $H_u$. Now, Breaker picks $z_1$. Note that $\{\{p_c,q_c\},\{y_4,z_2\}\}$ is a clean pairing for $v$ after that move. We thus define:
                $$ \mathbf{\Pi} = \{\{p_c,q_c\},\{y_4,z_2\}\} \cup \Pi'(u,x) \cup \bigcup_{t \in N \setminus \{v,u\}}\Pi'(t).$$
                \item Second, suppose that $x$ is a junction vertex i.e. $x \in \{p_e,q_e\}$ for some $e=\overrightarrow{u_1u_2} \in A$. Since the digraph $(N,A)$ is bipartite, we know that $\{w_b,w_c\} \neq \{u_1,u_2\}$. By symmetry, assume $w_c \not\in \{u_1,u_2\}$. Now, Breaker picks $z_1$. Note that $\{\{p_c,z_2\}\}$ is a pairing for $v$ after that move. We should not use $p_c$ in the pairing for $w_c$ since we already use it for $v$. We thus define:
            $$ \mathbf{\Pi} = \{\{p_c,z_2\}\} \cup \Pi'(w_c,p_c) \cup \Pi'(u_1,x) \cup \Pi'(u_2,x) \cup \bigcup_{t \in N \setminus \{v,w_c,u_1,u_2\}}\Pi'(t).$$
            \end{itemize}
        \end{enumerate}
        
        \item Case 2 is that of an active node $v \in B_{1,2}$, just before Maker picks $x_3$. The updated gadget of $v$ at this point is pictured in Figure \ref{fig:makerchoice2}. Suppose that Maker picks some $x \neq x_3$ instead: we now show that this is a losing move. Let $w_b$ and $w_c$ be the out-neighbors of $v$ through the arcs $b$ and $c$ respectively.
        \begin{figure}[h]
        \centering
        \includegraphics[scale=.55]{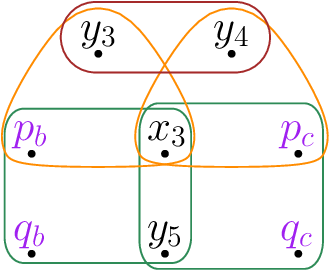}
        \caption{The gadget $H_v$ for $v \in B_{1,2}$, as updated after the first four moves of the sequence of regular play on $H_v$.}\label{fig:makerchoice2}
        \end{figure}
        \begin{enumerate}[label={(\alph*)}]
            \item First, suppose that $x \in \{y_3,y_4\}$. By symmetry, assume that $x=y_3$. Now, Breaker picks $y_4$ (which is forced anyway). Note that $\{\{p_b,x_3\},\{p_c,q_c\}\}$ is a pairing for $v$ after that move. We should not use $p_b$ in our pairing for $w_b$, so we define:
            $$ \mathbf{\Pi} = \{\{p_b,x_3\},\{p_c,q_c\}\} \cup \Pi'(w_b,p_b) \cup \bigcup_{t \in N \setminus \{v,w_b\}}\Pi'(t).$$
            \item Next, suppose that $x \in \{p_b,q_b,p_c,q_c,y_5\}$. By symmetry, assume that $x \in \{p_b,q_b,y_5\}$. Now, Breaker picks $x_3$. Note that $\{\{y_3,y_4\}\}$ is a clean pairing for $v$ after that move.
            \begin{itemize}
                \item If $x \in \{p_b,q_b\}$, then we define:
                $$ \mathbf{\Pi} = \{\{y_3,y_4\}\} \cup \Pi'(w_b,x) \cup \bigcup_{t \in N \setminus \{v,w_b\}}\Pi'(t).$$
                \item If $x = y_5$, then $x$ is an interior vertex of $H_v$, so we can simply define:
                $$ \mathbf{\Pi} = \{\{y_3,y_4\}\} \cup \bigcup_{t \in N \setminus \{v\}}\Pi'(t).$$
            \end{itemize}
            \item To end the proof of Case 2, suppose that $x \not\in V(H_v)$. Now, Breaker picks $x_3$. Note that $\{\{y_3,y_4\}\}$ is a clean pairing for $v$ after that move. 
            \begin{itemize}
            	\item If $x$ is an interior vertex of some (necessarily unique) gadget $H_u$, then we define:
                $$ \mathbf{\Pi} = \{\{y_3,y_4\}\} \cup \Pi'(u,x) \cup \bigcup_{t \in N \setminus \{v,u\}}\Pi'(t).$$
                \item If $x$ is a junction vertex i.e. $x \in \{p_e,q_e\}$ for some $e=\overrightarrow{u_1u_2} \in A$, then we define:
                $$ \mathbf{\Pi} = \{\{y_3,y_4\}\} \cup \Pi'(u_1,x) \cup \Pi'(u_2,x) \cup \bigcup_{t \in N \setminus \{v,u_1,u_2\}}\Pi'(t).$$
                
            \end{itemize}
        \end{enumerate}
        
        \item Case 3 is that of an active node $v \in B_{1,2}$, just before Maker picks $q_b$ (by symmetry, we assume that Breaker has chosen $y_4$ rather than $y_3$ earlier during that sequence). The updated gadget of $v$ at this point is pictured in Figure \ref{fig:makerchoice3}. Suppose that Maker picks some $x \neq q_b$ instead: we now show that this is a losing move. Let $w_b$ and $w_c$ be the out-neighbors of $v$ through the arcs $b$ and $c$ respectively.
        \begin{figure}[h]
        \centering
        \includegraphics[scale=.55]{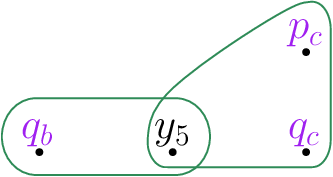}
        \caption{The gadget $H_v$ for $B \in M_{1,2}$, as updated after the first eight moves of the sequence of regular play on $H_v$.}\label{fig:makerchoice3}
        \end{figure}
        \begin{enumerate}[label={(\alph*)}]
            \item First, suppose that $x=y_5$. Now, Breaker picks $q_b$ (which is forced anyway). Note that $\{\{p_c,q_c\}\}$ is a clean pairing for $v$ after that move. Moreover, we can see that $\Pi'(w_b) \setminus \{\{p_b,q_b\}\}$ is a clean pairing for $w_b$ at that point. Indeed, Maker has picked $p_b$ but Breaker has picked $q_b$, and every edge of $H_{w_b}$ containing $p_b$ also contains $q_b$ (this is true for every input pair of every gadget). All in all, we define:
            $$ \mathbf{\Pi} = \{\{p_c,q_c\}\} \cup (\Pi'(w_b) \setminus \{\{p_b,q_b\}\}) \cup \bigcup_{t \in N \setminus \{v,w_b\}}\Pi'(t).$$
            \item Next, suppose that $x \in \{p_c,q_c\}$. Now, Breaker picks $y_5$. Note that $\varnothing$ is a clean pairing for $v$ after that move, as Breaker has killed all edges in $H_v$. Therefore, we define: $$ \mathbf{\Pi} = \Pi'(w_c,x) \cup \bigcup_{t \in N \setminus \{v,w_c\}}\Pi'(t).$$
            \item To end the proof of Case 3, suppose that $x \not\in V(H_v)$. Now, Breaker picks $y_5$. Note that $\varnothing$ is a clean pairing for $v$ after that move, as Breaker has killed all edges in $H_v$. The reasoning is now analogous to Case 2(c):
            \begin{itemize}
            	\item If $x$ is an interior vertex of some (necessarily unique) gadget $H_u$, then we define:
                $$ \mathbf{\Pi} = \Pi'(u,x) \cup \bigcup_{t \in N \setminus \{v,u\}}\Pi'(t).$$
                \item If $x$ is a junction vertex i.e. $x \in \{p_e,q_e\}$ for some $e=\overrightarrow{u_1u_2} \in A$, then we define:
                $$ \mathbf{\Pi} = \Pi'(u_1,x) \cup \Pi'(u_2,x) \cup \bigcup_{t \in N \setminus \{v,u_1,u_2\}}\Pi'(t).$$
            \end{itemize}
        \end{enumerate}
    \end{enumerate}
    This ends the proof of the claim, as all cases have been addressed.
\end{proof}

\subsection{Concluding statements}

Putting Claims \ref{cla:result}, \ref{cla:optimalbreaker} and \ref{cla:optimalmaker} together, we can see that Maker has a winning strategy for the Maker-Breaker game on $H$ if and only if Alice has a winning strategy for the Geography game on $(N,A,s)$. Note that $|V(H)|=O(|N|)$, $|E(H)|=O(|N|)$, and $H$ can be constructed from $(N,A,s)$ in linear time. This achieves the desired reduction and ends the proof of Theorem \ref{the:mainMB}.

\begin{remark}\label{rem:boundeddegree}
	We have actually shown that the Maker-Breaker problem is {\sf PSPACE}-complete even when restricted to 4-uniform hypergraphs of bounded maximum degree. More specifically, it can easily be checked that all vertices in our construction have degree at most 5, except for the vertices $p_a,q_a,y_1,y_2$ in the gadget $H_s$ which are of degree between 10 and 12. Note that the higher degree of these four vertices is only due to the fact that we were aiming for a 4-uniform construction. If one simply wants a construction of rank 4, then the gadget $H_s$ can be replaced by two edges $\{p_a,y_1\}$ and $\{q_a,y_2\}$, which yields an overall construction of maximum degree at most 5. This improves on the known {\sf PSPACE}-completeness result for the Maker-Breaker game on hypergraphs of rank 12 and maximum degree 5 \cite{boundeddegreeQBF}.
\end{remark}

\section{{\PSPACE}-completeness of 4-uniform Maker-Maker games}\label{section4}\strut
\indent In this section, we are going to show the following result.

\begin{theorem}\label{the:mainMM}
    Deciding whether FP has a winning strategy for the Maker-Maker game on a 4-uniform hypergraph is a {\sf PSPACE}-complete problem.
\end{theorem}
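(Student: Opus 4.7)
The plan is to reuse the hypergraph $H$ constructed in Section \ref{section3} from a Geography instance $(N, A, s)$, showing that FP wins the Maker-Maker game on $H$ if and only if Alice wins the Geography game on $(N, A, s)$. Membership in PSPACE is standard for Maker-Maker games. For 4-uniformity, since Lemma \ref{lem:reduction} only applies to Maker-Breaker, I would directly ensure the construction produces a 4-uniform hypergraph; the two edges of size 2 that appear in $H_s$ in the MB construction can be padded with fresh interior vertices of a suitably modified $H_s$, chosen so that they do not disrupt the forcing structure of regular play.

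The core task is verifying that the same notion of regular play is optimal for both players in MM. Forced moves (labelled ``f'') remain forced for SP: if SP does not pick the threatened vertex, FP immediately completes an edge on the next move and wins. Greedy moves (labelled ``g'') must now be justified via Lemma \ref{lem:greedy2}. For each such move, the updated game at that moment contains an updated edge $\{x,y\}$ with $y$ appearing only in edges containing $x$, so after FP picks $x$, every SP response other than picking $y$ leaves FP with a one-move win threat, which SP must block with $y$ or lose; this verifies the extra hypothesis of Lemma \ref{lem:greedy2}.

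For the direction where Alice wins Geography, FP simulates Alice's strategy via regular play as in Claim \ref{cla:result}, winning at the end by completing an updated edge of size 1 in some $H_v$ with $v \in M_{2,1}$. Crucially, one must also verify that SP does not fill a blue edge during this play; this follows by inspection of the gadgets, since SP's picks during each sequence of regular play form a small set of interior vertices, and the gadget design ensures that no edge is entirely contained in the union of SP's picks (since every edge contains at least one vertex picked by FP during the sequence). For the direction where Bob wins Geography, SP simulates Bob's strategy: if FP ever deviates from regular play, then by the same case analysis as in Claim \ref{cla:optimalmaker} the remaining updated hypergraph admits a pairing, and Lemma \ref{lem:pairing} gives SP a drawing strategy in MM; otherwise, the endgame position admits a clean pairing by Claim \ref{cla:pairing2}, again giving SP a draw.

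The main obstacle is twofold: achieving 4-uniformity without Lemma \ref{lem:reduction}, which requires a careful reworking of the starting gadget $H_s$; and ensuring that SP can never ``accidentally'' fill an edge and win, either during regular play or while executing a pairing response to an FP deviation. The latter is subtle because, unlike in MB, SP now has an incentive to deviate from forced play if doing so leads to filling an edge first; the verification therefore reduces to checking, for each gadget and each potential SP deviation, that the alternative move either loses immediately to FP's threat or still leaves FP with a winning continuation. I expect this case analysis, parallel to but stricter than that of Claim \ref{cla:optimalbreaker}, to be the most technically involved step.
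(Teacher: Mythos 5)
Your plan follows the paper's route almost exactly (keep the construction $H$, re-verify regular play in the Maker-Maker setting via Lemma \ref{lem:greedy2}, punish FP deviations with the pairings from Claim \ref{cla:optimalmaker}, and fix uniformity by hand at the start gadget), but two steps are not actually carried out, and one of them hides the only genuinely new idea needed. The uniformization is the real gap: you say the two size-2 edges $\{p_a,y_1\}$ and $\{q_a,y_2\}$ of $H_s$ "can be padded with fresh interior vertices \ldots chosen so that they do not disrupt the forcing structure," but the obvious padding fails. If you replace $\{p_a,y_1\}$ by a single 4-edge $\{p_a,y_1,z,z'\}$, then FP picking $p_a$ no longer forces SP to answer $y_1$ (SP can kill the edge by taking $z$), and the whole hand-off of $p_a,q_a$ to FP collapses. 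The paper's solution is to replace $\{p_a,y_1\}$ by the ten edges $\{p_a,y_1,z_i,z_j\}$ over five fresh vertices $z_1,\dots,z_5$ (and similarly for $\{q_a,y_2\}$ with five more): SP cannot kill all of these with one move, and after $p_a,y_1$ are taken by FP at least three of the $z_i$ remain, giving FP a two-move win unless SP answered $y_1$. Some such overlapping family is required; without it the reduction is not 4-uniform and the theorem as stated is not reached.

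The second issue is your justification that SP never fills a blue edge: "every edge contains at least one vertex picked by FP during the sequence" is false. In the $M_{1,2}$ gadget, when FP activates $w_2$ she picks $x_1,x_2,z_2,p_c,q_c$, none of which lies in the edge $\{z_1,p_b,q_b,z_b\}$, while SP's forced reply $z_1$ does lie in it; this blue edge survives at size 3 with SP owning one vertex. The conclusion you want is still true, but the correct invariant is the one in Claims \ref{cla:nothreat} and \ref{cla:regularMM}: every updated blue edge has size 4 except for these specific $\{p_e,q_e,z_e\}$ remnants of size 3, so SP is never within one move of winning. Tracking these remnants explicitly also matters if you ever need FP's own drawing strategy in the Bob-wins case (the paper covers them by adding the junction pairs $\{p_e,q_e\}$ to the pairing); your framing, which exhibits explicit strategies for each direction of the reduction rather than arguing optimality of regular play for both players, can avoid that particular point, but the size-3 blue edges still have to be accounted for when you argue SP cannot win first.
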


Since membership in {\sf PSPACE} is well documented for Maker-Maker games \cite{Byskov}, it suffices to show {\sf PSPACE}-hardness. For this, we show that our construction $H$ from Section \ref{section3} is such that Maker has a winning strategy for the Maker-Breaker game on $H$ if and only if FP has a winning strategy for the Maker-Maker game on $H$. 

We define regular play the exact same as in Section \ref{section3}, with FP in the role of Maker and SP in the role of Breaker. We must show that regular play is still optimal in the Maker-Maker convention, and that it leads to the desired result: FP winning if Alice wins the Geography game, or a draw if Bob wins the Geography game.

As the players follow regular play, the updated red edges in the Maker-Maker convention are the same as the updated edges were in the Maker-Breaker convention, so FP makes the same threats that Maker would in the Maker-Breaker convention. However, we need to check that the updated blue edges are not an issue, i.e. that SP never threatens anything of his own. 
The next claim is key in that regard.

\begin{claim}\label{cla:nothreat}
At any point throughout regular play and assuming both players follow it, all updated blue edges are of size 3 or 4, and those of size 3 are precisely of the form $\{p_b,q_b,y_b\}$ or $\{p_b,q_b,y_c\}$ inside gadgets of nodes in $M_{1,2}$ with the notations from Table \ref{tab:tableau2}. Moreover, throughout the first sequence of regular play, all updated blue edges are of size 4.
\end{claim}

\begin{proof}
	This comes from the following simple observation: almost every time regular play suggests some move $y$ for SP, all edges containing $y$ have already been killed by FP at that point, meaning that they contain at least one vertex that has been picked by FP earlier during regular play. The only exception happens for an active node in $M_{1,2}$, when regular play suggests that SP picks $y=z_1$ (resp. $y=z_2$), as there is a single updated blue edge $\{p_b,q_b,y_b,z_1\}$ (resp. $\{p_c,q_c,y_c,z_2\}$) containing $y$ at that point, which yields the updated blue edge $\{p_b,q_b,y_b\}$ (resp. $\{p_c,q_c,y_c\}$).
	
    Let us detail one case, which is that of an active node in $M_{1,2}$, with the notations from Table \ref{tab:tableau2}. FP starts the sequence by picking $x_1$. SP now picks $y_1$: the only edge of $H$ containing $y_1$ is the edge $\{p_a,q_a,x_1,y_1\}$, but that edge has already been killed by FP as she picked $x_1$ (actually, FP had killed it even earlier by picking $p_a$ and $q_a$). The same thing happens when FP picks $x_2$ and SP picks $y_2$. Say FP picks $z_1$ next. SP now picks $z_2$: the only edges of $H$ containing $z_2$ are $\{x_1,x_2,z_1,z_2\}$, which has been killed by FP as she picked $x_1$, and $\{p_c,q_c,y_c,z_2\}$, which yields the blue edge $\{p_c,q_c,y_c\}$ as it has not been killed by FP. Now, FP picks $p_b$ and SP picks $y_3$: the only edge of $H$ containing $y_3$ is $\{p_b,x_1,y_3,z_1\}$, which has been killed by FP as she picked $x_1$. Finally, FP picks $q_b$, and SP picks $y_b$: the only edge of $H$ containing $y_b$ is $\{p_b,q_b,y_b,z_1\}$, which has been killed by FP as she picked $z_1$.
\end{proof}

The next claim ensures that regular play always leads to the desired result for the Maker-Maker game.

\begin{claim}\label{cla:resultMM}
    Assume that both players are constrained to follow regular play until it ends.
    \begin{itemize}
        \item If Alice has a winning strategy for the Geography game on $(N,A,s)$, then FP has a winning strategy for the Maker-Maker game on $H$ where she wins on the first move succeeding regular play.
        \item If Bob has a winning strategy for the Geography game on $(N,A,s)$, then both players have a drawing strategy for the Maker-Maker game on $H$.
    \end{itemize}
\end{claim}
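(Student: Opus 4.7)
The plan is to follow the same game-translation scheme as in the proof of Claim \ref{cla:result}, adapting it to the Maker-Maker convention by using the earlier claims of this section to control what happens in the blue (SP-side) hypergraph.

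For the first assertion, assume Alice has a winning strategy $\strat$ for the Geography game on $(N,A,s)$. FP plays as Alice did in the proof of Claim \ref{cla:result}: she maintains a fictitious Geography game in parallel, translating Bob's response in Geography into her initiation of the corresponding sequence of regular play in the appropriate gadget, and translating her own response via $\strat$ into the chosen branch at a node of out-degree 2. The updated red hypergraph evolves exactly as the Maker-Breaker updated hypergraph did in the proof of Claim \ref{cla:result}, so regular play eventually ends by reactivating some node $w \in M_{2,1}$, and item \ref{item8} of Claim \ref{cla:regular} leaves FP with a singleton updated red edge $\{z_g\}$ to fill on her next move. The only thing to verify on top of the Maker-Breaker proof is that SP has not secretly filled a blue edge beforehand, which is precisely where Claim \ref{cla:regularMM} intervenes: property ($\ast$) implies that every updated blue edge remains of size at least 3 throughout regular play, so SP cannot have won before FP's winning move.

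For the second assertion, assume Bob has a winning strategy $\strat$ for Geography. Exactly as in the Breaker case of Claim \ref{cla:result}, SP simulates $\strat$ during regular play, which ends with the reactivation of some node $w \in B_{2,1}$; the updated hypergraph then admits the clean pairing
$$ \mathbf{\Pi} = \bigcup_{t \in N \setminus \{w\}} \Pi'(t) $$
provided by Claim \ref{cla:pairing2}. Since $\mathbf{\Pi}$ is a pairing of the updated red hypergraph, Lemma \ref{lem:pairing} supplies SP with a drawing strategy from that position, and Claim \ref{cla:regularMM} guarantees that SP has not already filled a blue edge during regular play. Concatenating the regular-play phase with the pairing phase thus yields a complete drawing strategy for SP. A drawing strategy for FP is then obtained via the standard strategy-stealing argument recalled in Section \ref{section1}: a Maker-Maker game can only end in a first-player win or a draw, and SP's drawing strategy rules out the former, so the game is a draw and in particular FP also has a drawing strategy.

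I expect the only genuinely new obstacle compared to Claim \ref{cla:result} to have been Claim \ref{cla:regularMM} itself, which was designed precisely so that the Maker-Breaker analysis transfers to the Maker-Maker setting without any concern that SP might accidentally win during regular play. Once property ($\ast$) is at our disposal, the proof is structurally identical to that of Claim \ref{cla:result}, with the strategy-stealing argument providing the short additional step specific to the Maker-Maker convention.
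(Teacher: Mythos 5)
Your treatment of the first bullet, and of SP's drawing strategy in the second bullet, matches the paper's proof. The gap is in your final step: you deduce FP's drawing strategy from the strategy-stealing argument, and that argument does not apply here. Strategy stealing shows that SP has no winning strategy \emph{from the initial, symmetric position} of a Maker-Maker game, by letting FP waste her first move and then mirror SP's strategy. What this claim needs (and what Claim \ref{cla:optimalFP} later consumes) is that FP can still draw \emph{from the intermediate position reached at the end of regular play}. That position is asymmetric --- FP and SP have each committed to specific vertices, and the updated red and blue hypergraphs differ --- so there is no a priori reason SP could not win from there; the paper flags exactly this: ``for all we know so far, FP might be losing at the end of regular play because it had her play suboptimal moves.'' Moreover, under the hypothesis that both players are \emph{constrained} to follow regular play, FP cannot make the arbitrary first move that strategy stealing requires, so the argument is unavailable even at the global level.

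The missing ingredient is a direct construction: by Lemma \ref{lem:pairing} it suffices to exhibit a pairing of the updated \emph{blue} edges at the end of regular play. Claim \ref{cla:regularMM} says every such edge is either of size 4 --- hence untouched by both players, hence also an updated red edge, and therefore already covered by the clean pairing $\mathbf{\Pi}$ that SP uses --- or of the form $\{p_e,q_e,z_e\}$, which is covered by adding the clean junction pair $\{p_e,q_e\}$ to $\mathbf{\Pi}$ if it is not there already. This yields a pairing of the blue edges and hence a drawing strategy for FP from the end-of-regular-play position, which is what the claim asserts.
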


\begin{proof}
    For the most part, the proof is the same as for the Maker-Breaker convention (Claim \ref{cla:result}), since regular play is defined the same and also comes to its term without a winner (indeed, we know FP does not win by Claim \ref{cla:regular}, and SP does not win by Claim \ref{cla:nothreat}). If Alice has a winning strategy for the Geography game on $(N,A,s)$, then FP has a winning strategy for the Maker-Maker game on $H$ where she wins on the first move succeeding regular play, as she does in the Maker-Breaker convention by Claim \ref{cla:result}. If Bob has a winning strategy for the Geography game on $(N,A,s)$, then SP has a drawing strategy for the Maker-Maker game on $H$, using the clean pairing $\mathbf{\Pi}$ of the updated red edges given by Claim \ref{cla:result}. It only remains to be shown that, in the case where Bob is the one that has a winning strategy for the Geography game on $(N,A,s)$, FP also has a drawing strategy at the end of regular play. This is not a given: for all we know so far, FP might be losing at the end of regular play because it had her play suboptimal moves.
    
    Using Lemma \ref{lem:pairing}, it suffices to show that the updated blue edges at the end of regular play admit a pairing. Recall that, by Claim \ref{cla:nothreat}, all such edges are either of size 4 or of the form $\{p_e,q_e,y_e\}$. Those of size 4 are original edges of $H$, so they exist in both red and blue and are taken care of by the same clean pairing $\mathbf{\Pi}$ that SP uses to draw the game. As for each one of the form $\{p_e,q_e,y_e\}$, it suffices to add the clean junction pair $\{p_e,q_e\}$ to the clean pairing $\mathbf{\Pi}$ (if it did not contain it already). All in all, we get a pairing of the updated blue edges, which ends the proof.
\end{proof}

Finally, we show that regular play is the optimal route for both players.

\begin{claim}\label{cla:optimalSP}
    Assume that both players have followed regular play thus far, and that regular play is not over. If it is SP's turn, then it is optimal for SP to follow regular play with his next move, as any other move puts him in a losing position.
\end{claim}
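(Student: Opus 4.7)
The plan is to adapt the proof of Claim \ref{cla:optimalbreaker} almost verbatim, using the fact that while both players follow regular play the updated red edges in Maker-Maker coincide with the Maker-Breaker updated edges, so FP's threats on SP are exactly Maker's threats on Breaker. The only new worry is that SP, unlike Breaker, could in principle ignore such a threat if he himself were threatening to fill a blue edge on his next move. I would rule this out with Claim \ref{cla:regularMM}: throughout regular play every updated blue edge has size at least $3$, so SP is always at least two moves away from any win and cannot in particular create a one-move threat of his own.

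For each move labelled ``f'' in Tables \ref{tab:tableau1} and \ref{tab:tableau2}, the inspection already done in the proof of Claim \ref{cla:optimalbreaker} shows that FP's last pick has created an updated red edge of size $1$ at the vertex $y$ prescribed for SP. If SP deviates by picking anything other than $y$, then FP fills that edge and wins on her next move; the size-$\geq 3$ invariant on blue edges prevents SP from racing to a win of his own before FP strikes. So SP must follow regular play.

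The only non-forced case is that of an active node $v \in B_{1,2}$, where regular play allows SP to choose between $y_3$ and $y_4$. Here I would rerun the two-move threat analysis done at the end of the proof of Claim \ref{cla:optimalbreaker}: after the five moves preceding this choice (Figure \ref{fig:breakerchoice}), SP must pick one of $\{y_3,y_4,p_b\}$ and also one of $\{y_3,y_4,p_c\}$, lest FP build a double threat and win in two moves. The intersection of these two sets is $\{y_3,y_4\}$, precisely the pair offered by regular play, and the blue-edge size bound again forbids any offensive detour by SP.

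The step I expect to require the most care is exactly this blue-threat check: a priori SP might be willing to let FP win on red if he could win on blue first, and the whole reduction would collapse if that were possible somewhere during regular play. Once Claim \ref{cla:regularMM} is trusted to rule this out uniformly, the remainder is a nearly word-for-word transcription of the Maker-Breaker argument.
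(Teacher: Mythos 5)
Your proposal matches the paper's proof essentially verbatim: the same use of Claim \ref{cla:regularMM} to rule out blue counter-threats, the same forced-move inspection inherited from Claim \ref{cla:optimalbreaker}, and the same double-threat intersection argument restricting SP to $\{y_3,y_4\}$ in the $B_{1,2}$ case. One small imprecision: Claim \ref{cla:regularMM} guarantees the size-$\geq 3$ property for blue edges only \emph{after the first three moves}, so SP's very first move (where the blue edge $\{q_a,y_2\}$ still has size 2 and SP genuinely can create a one-move blue threat) needs the separate, trivial observation that FP's immediate one-move win on $\{y_1\}$ strikes first --- the paper treats this move as an explicit special case before invoking the invariant, and you should too.
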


\begin{proof}
	Suppose that SP is first to violate regular play, and does so during the first sequence (resp. during the second sequence or later). In the Maker-Breaker convention, we know by Claim \ref{cla:optimalbreaker} that any such violation of regular play by Breaker allows Maker to fill an edge in three more moves (resp. in at most two more moves). The only way that SP could get away with it is if he could win faster than that himself: however, this is not the case since, by Claim \ref{cla:nothreat}, all updated blue edges before that violation by SP were of size 4 (resp. of size 3 or 4).
\end{proof}

\begin{claim}\label{cla:optimalFP}
    Assume that both players have followed regular play thus far, and that regular play is not over. If it is FP's turn, then it is optimal for FP to follow regular play with her next move.
\end{claim}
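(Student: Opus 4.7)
The plan is to mirror the proof of Claim \ref{cla:optimalmaker}, with two adaptations for the Maker-Maker convention: greedy moves will be justified via Lemma \ref{lem:greedy2} in place of Lemma \ref{lem:greedy}, and the pairings used to punish FP's deviations will now provide SP with a drawing strategy through Lemma \ref{lem:pairing}, rather than a winning strategy for Breaker.

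First, for every move labelled ``g'' in Tables \ref{tab:tableau1} and \ref{tab:tableau2}, where FP's suggested pick $x$ forces SP's response $y$, I would invoke Lemma \ref{lem:greedy2}. Its first hypothesis requires that every updated edge (red or blue) $e$ with $y \in e$ also contain $x$. For red edges this is exactly what was verified in Claim \ref{cla:optimalmaker}, since while both players follow regular play the updated red edges in the Maker-Maker game coincide with the updated edges in the Maker-Breaker game. For blue edges, Claim \ref{cla:nothreat} states that, right after FP plays $x$, the vertex $y$ appears in no updated blue edge (the exception listed there arises only after FP's non-greedy $z_1$ or $z_2$ move in $M_{1,2}$, never after a move labelled ``g''); translating back to the position before FP's move, this means every updated blue edge containing $y$ must already contain $x$, which is exactly the required implication. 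The second hypothesis of Lemma \ref{lem:greedy2} --- that any SP response other than $y$ is losing for SP --- is precisely Claim \ref{cla:optimalSP}.

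Next, I would reexamine the three non-greedy situations identified in Claim \ref{cla:optimalmaker}: FP's choice between $z_1$ and $z_2$ for $v \in M_{1,2}$ (Case 1), FP picking $x_3$ for $v \in B_{1,2}$ (Case 2), and FP picking $q_b$ or $q_c$ for $v \in B_{1,2}$ (Case 3). In each case, if FP deviates by picking some $x$ not suggested by regular play, I would have SP reply with exactly the move selected in Claim \ref{cla:optimalmaker}, after which the same pairing $\mathbf{\Pi}$ of the updated hypergraph still applies. By Lemma \ref{lem:pairing}, this pairing furnishes SP with a drawing strategy for the rest of the Maker-Maker game, so FP cannot win following the deviation. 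Since regular play already guarantees FP at least a draw (and in fact a win when Alice wins the Geography game) by Claim \ref{cla:resultMM}, deviation can never be strictly better, establishing optimality.

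The one genuinely new piece of work compared to Claim \ref{cla:optimalmaker} is the blue-edge half of condition (a) of Lemma \ref{lem:greedy2}, which is precisely the reason Claim \ref{cla:nothreat} was isolated in the form it was. The only extra checking required is the bookkeeping that confirms the exception in Claim \ref{cla:nothreat} never coincides with a move labelled ``g''. Everything else --- the case splits on $x$, the specific pairings, and the SP replies that realize them --- transfers directly from the Maker-Breaker argument.
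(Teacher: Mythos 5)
Your proposal is correct and follows essentially the same route as the paper: greedy moves justified by Lemma \ref{lem:greedy2} (red edges via Claim \ref{cla:optimalmaker}, blue edges via Claim \ref{cla:nothreat}, with the same observation that the exception in Claim \ref{cla:nothreat} never follows a ``g'' move), and deviations in the three non-greedy situations punished by reusing the Maker-Breaker pairings, which give SP a drawing strategy and hence make the deviation non-winning for FP. The only point where the paper is slightly more explicit is the ``non-losing'' half of the final optimality argument: Claim \ref{cla:resultMM} alone assumes \emph{both} players follow regular play, so to conclude that the regular move is non-losing for FP one must also note that if SP later deviates then FP wins by Claim \ref{cla:optimalSP} --- a one-line addition using a claim you already invoke.
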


\begin{proof}

We divide FP's moves during regular play into two categories: greedy moves (labelled ``g'', ``2-g'' or ``3-g'') on one side, and the rest on the other.
	
\begin{itemize}

\item Suppose that regular play suggests a $k$-greedy move $x$ for some $k \in \{1,2,3\}$. We know by Claim \ref{cla:optimalgreedy} that $x$ is a $k$-greedy move in the Maker-Breaker sense, but we need to check that it is also a $k$-greedy move in the Maker-Maker sense as per Definition \ref{def:greedy2}. Let $y$ be a possible answer by SP in case FP picks $x$: we must show that $y$ satisfies item {\em (i)} or item {\em (ii)} from Definition \ref{def:greedy2}.

\begin{itemize}
	\item First, suppose that $y$ is not suggested by regular play as an answer to FP picking $x$. By Claim \ref{cla:optimalgreedy}, $y$ then falls under item {\em (i)} from Definition \ref{def:greedy} i.e. $y$ is a losing answer for Breaker in the Maker-Breaker convention, as it hands his opponent a guaranteed win in at most $k$ more moves. The same then holds for SP in the Maker-Maker convention: indeed, SP cannot win faster than that since, by Claim \ref{cla:nothreat}, the updated blue edges before SP picks $y$ are of size strictly larger than $k$ (recall that the case $k=3$ only occurs during the first sequence of regular play). Therefore, $y$ satisfies item {\em (i)} from Definition \ref{def:greedy2}.

	\item Second, suppose that $y$ is the move (or one of the moves, if there are several) suggested by regular play as an answer to FP picking $x$. By Claim \ref{cla:optimalgreedy}, $y$ then falls under item {\em (ii)} from Definition \ref{def:greedy} i.e. every updated red edge containing $y$ also contains $x$ (recall that updated edges in the Maker-Breaker convention correspond to updated red edges in the Maker-Maker convention). We now show that the same is true of the updated blue edges. Indeed, by Claim \ref{cla:nothreat}, all updated blue edges are of size 3 or 4, and those of size 3 are precisely of the form $\{p_b,q_b,y_b\}$ or $\{p_b,q_b,y_c\}$ inside gadgets of nodes in $M_{1,2}$ with the notations from Table \ref{tab:tableau2}. Updated blue edges of size 4 are original edges of $H$, so they exist in both red and blue, and we already know that every updated red edge containing $y$ also contains $x$. Finally, consider an updated blue edge of size 3, say $e=\{p_c,q_c,y_c\}$ inside $H_v$ for some $v \in M_{1,2}$ with the notations from Table \ref{tab:tableau2}. To show that $y \in e \implies x \in e$, it suffices to show that $y \not\in e$. Note that $e$ stems from SP picking $z_2$ earlier: in that scenario, regular play never suggests picking $y_c$, so $y \neq y_c$. Moreover, regular play never suggests that SP picks a junction vertex, so $y \neq p_c,q_c$. All in all, we have $y \not\in e$, which concludes: $y$ satisfies item {\em (ii)} from Definition \ref{def:greedy2}.
\end{itemize}
All in all, $x$ is indeed a greedy move, so it is optimal by Lemma \ref{lem:greedy2}.

\item Now, suppose that regular play does not suggest a greedy move. Then, the optimality of regular play for FP's next move is an obvious consequence of two facts:
    \begin{enumerate}[noitemsep,nolistsep,label={(\arabic*)}]
        \item Playing a move that follows regular play puts FP in a non-losing position. \label{fact1}
        \item Playing a move that violates regular play puts FP in a non-winning position. \label{fact2}
    \end{enumerate}
    Let us show that these two facts are indeed true. Fact \ref{fact1} comes from the observation that FP will always have the option to keep following regular play for as long as SP also does: if SP ever deviates from regular play, then FP can win by Claim \ref{cla:optimalSP}, otherwise we reach the end of regular play and FP has (at least) a drawing strategy at that point by Claim \ref{cla:resultMM}. 
    Fact \ref{fact2} relies on the fact that the move being suggested by regular play is not a greedy move. Indeed, in such situations, Claim \ref{cla:optimalmaker} ensures that, in the Maker-Breaker convention, violating regular play gives Breaker a winning strategy for the Maker-Breaker game. SP could borrow that strategy to (at least) draw the game in the Maker-Maker convention. \qedhere
\end{itemize}	
\end{proof}

Putting Claims \ref{cla:resultMM}, \ref{cla:optimalSP} and \ref{cla:optimalFP} together, we can see that FP has a winning strategy for the Maker-Maker game on $H$ if and only if Alice has a winning strategy for the Geography game on $(N,A,s)$. This concludes the proof of Theorem \ref{the:mainMM}.

\begin{remark}
	We have actually shown that the Maker-Maker problem is {\sf PSPACE}-complete even when restricted to 4-uniform hypergraphs of bounded maximum degree. More specifically, as noted in Remark \ref{rem:boundeddegree}, our construction has maximum degree at most 12, which can be brought down to 5 if sacrificing 4-uniformity to get a hypergraph of rank 4 instead. This improves on the known {\sf PSPACE}-completeness result for the Maker-Maker game on hypergraphs of rank 13 and maximum degree 5 \cite{boundeddegreeQBF}.
\end{remark}

\section{{\PSPACE}-completeness of the vertex-$C_4$-game}\label{section5}\strut
\indent The vertex-$C_4$-game is a specific 4-uniform positional game, where the players take turns picking vertices of a (simple undirected) graph $G$ and the winning sets are the vertex sets of 4-cycles in $G$. We have the following strengthening of Theorems \ref{the:mainMB} and \ref{the:mainMM}.

\begin{theorem}\label{the:C4game}
	Deciding whether the first player has a winning strategy for the vertex-$C_4$-game is a {\sf PSPACE}-complete problem in both Maker-Breaker and Maker-Maker conventions.
\end{theorem}

\begin{proof}
	We just need to show that, denoting by $(N,A,s)$ an instance of the Geography game and by $H$ the hypergraph constructed in the proof of Theorems \ref{the:mainMB} and \ref{the:mainMM}, there exists a graph $G$ on the same vertex set as $H$ such that the vertex sets of 4-cycles in $G$ are the edges of $H$. For all $v \in N$, let $G_v=(V_v,E_v)$ be the graph version of the gadget hypergraph $H_v$, as drawn in Tables \ref{tab:tableau1}--\ref{tab:tableau2}--\ref{tab:tableau3}: it can easily be checked that the vertex sets of 4-cycles in $G_v$ are the edges of $H_v$. What remains to be shown is that, when taking the union $G=\bigcup_{v \in N} G_v$, we do not accidentally create any additional 4-cycle. 
	
	Suppose for a contradiction that there exists a 4-cycle $C=(V_C,E_C)$ in $G$ such that, for all $v \in N$, $C$ is not a subgraph of $G_v$. In particular, let $N_C \subseteq N$ be of minimum size such that $E_C \subseteq \bigcup_{v \in N_C} E_v$: this means that $|N_C| \neq 1$. To conclude, we will use our assumption on the Geography game that there is no cycle (oriented or not) of length less than 6 in the digraph $(N,A)$, as per Theorem \ref{the:geography}. First of all, it is impossible that $|N_C| \in \{3,4\}$: indeed, the nodes in $N_C$ would then form a (not necessarily oriented) 3-cycle or 4-cycle in the digraph $(N,A)$, a contradiction. Therefore, we have $|N_C|=2$. Write $N_C=\{v_1,v_2\}$ such that $a=\overrightarrow{v_1v_2} \in A$. Note that $V_{v_1} \cap V_{v_2}=\{p_a,q_a\}$: this is due to the fact that there is no 2-cycle in $(N,A)$ (indeed, if the reverse arc $\overrightarrow{v_2v_1}$ was also present, then $G_{v_1}$ and $G_{v_2}$ would share two junction pairs instead of one). By minimality of $|N_C|$, we can write $C=x_1x_2x_3x_4$ such that $x_1x_2 \in E_{v_1} \setminus E_{v_2}$ and $x_2x_3 \in E_{v_2} \setminus E_{v_1}$. Moreover, since $p_aq_a \in E_{v_1} \cap E_{v_2}$ (indeed, as apparent in the figures from Tables \ref{tab:tableau1}--\ref{tab:tableau2}--\ref{tab:tableau3}, a junction pair is an edge in every gadget it appears in), the only possibility is that: $x_1 \in V_{v_1} \setminus \{p_a,q_a\}$, $x_2 \in \{p_a,q_a\}$, and $x_3 \in V_{v_2} \setminus \{p_a,q_a\}$. Since $x_4$ is a common neighbor of $x_1$ and $x_3$ in $G_{v_1} \cup G_{v_2}$, this means that $x_4 \in \{p_a,q_a\}$. All in all, $p_a$ and $q_a$ have a common neighbor in $V_{v_1}$ (namely, $x_1$). However, we can see in the figures from Tables \ref{tab:tableau1}--\ref{tab:tableau2}--\ref{tab:tableau3} that the two vertices of a junction pair never have a common neighbor. This is a contradiction.
\end{proof}

\begin{remark}
	We have actually shown that, for both Maker-Breaker and Maker-Maker conventions, deciding whether the first player has a winning strategy for the vertex-$C_4$-game is a {\sf PSPACE}-complete problem even when restricted to graphs of bounded maximum degree. Indeed, it is easily verified that the graph used in our construction has maximum degree 8 (attained by the junction vertices of $H_s$).
\end{remark}

\section{Conclusion}\label{section6}\strut
\indent As our main result, we have shown that deciding the outcome of the Maker-Breaker game or the Maker-Maker game on a 4-uniform hypergraph is a {\PSPACE}-complete problem. For the Maker-Breaker convention, this closes the complexity gap since the game on hypergraphs of rank 3 is solved in polynomial time \cite{Florian_MB3}. For the Maker-Maker convention, this improves on the previous result for hypergraphs of rank 4 \cite{FlorianJonas_MM4}, but the problem remains open for hypergraphs of rank 3.

Our {\sf PSPACE}-hard constructions have bounded maximum degree, but they contain many
pairs of edges which share two or three vertices. Therefore, one could study the complexity of the Maker-Breaker and Maker-Maker games on hypergraphs that are {\em linear}, meaning that any two distinct edges intersect on at most one vertex. It is also of note that the fragment of \textsc{GeneralizedGeography} that we reduce from uses planar digraphs \cite{Geography} and, as such, our method generates $O(1)$-planar 4-uniform hypergraphs. It would be interesting to see if the Maker-Breaker and Maker-Maker problems remain {\sf PSPACE}-complete on planar 4-uniform hypergraphs.

We have also shown that deciding the outcome of the vertex-$C_4$-game is a {\sf PSPACE}-complete problem for both Maker-Breaker and Maker-Maker conventions. This contrasts with the edge-$H$-game, where the smallest known graphs $H$ for which {\sf PSPACE}-completeness is established (in the Maker-Breaker convention) are a graph of order 51 and size 57 and a tree of order 91 \cite{Hgame}. Actually, these numbers can be lowered using our results. Indeed, the authors use reductions from the Maker-Breaker game on 6-uniform hypergraphs, but reducing from the Maker-Breaker game on 4-uniform hypergraphs instead gives a graph of order 35 and size 39 and a tree of order 45. However, this is still far bigger than what we have obtained for the vertex-$H$-game, where the case $H=C_4$ is {\sf PSPACE}-complete. As such, a possible follow-up would be to study the algorithmic complexity of the edge-$H$-game for some graphs $H$ of size 4. The only known results are polynomial-time algorithms for the edge-$K_{1,4}$-game on forests \cite{Hgame} and the edge-$P_5$-game on forests \cite{Arthur} in the Maker-Breaker convention.

Another convention of positional games is called {\em Avoider-Enforcer}, in which Avoider loses if she fills an edge while Enforcer aims at forcing her to fill an edge. A polynomial-time algorithm is known for hypergraphs of rank 2, and for linear hypergraphs of rank 3 in the case where Avoider plays last \cite{AvoiderEnforcer}, while the problem is {\sf PSPACE}-complete for 6-uniform hypergraphs \cite{NacimValentin}. The other cases remain open, and inspiration from the reductions used for the Maker-Breaker convention might be possible, as these two conventions share some important properties from both being {\em weak games} where the two players have complementary goals. It is noteworthy that the 6-uniform construction used to show {\sf PSPACE}-hardness for the Avoider-Enforcer convention \cite{NacimValentin} is very similar to the one that was used for the Maker-Breaker convention \cite{RW21}.

The algorithmic complexity of \textsc{UnorderedQBF} on 3-CNF formulas is also unknown. This problem has been conjectured to be tractable \cite{RW20}, in stunning contrast to its ordered counterpart \textsc{3-QBF} which is {\sf PSPACE}-complete \cite{QBF}. For now, tractability has only been obtained for some subcases \cite{RW20,Florian_MB3}.

Maker-Breaker games are usually seen as asymmetrical. Yet, each player actually aims at filling a transversal of the other player's edges (a {\em transversal} of a set of edges $\mathcal{E}$ is a set of vertices that intersects every element of $\mathcal{E}$). The Maker-Breaker game can therefore be seen as a type of Maker-Maker game where the players have different different winning sets, a variant which has recently been introduced \cite{FlorianJonas_Eurocomb,FlorianJonas_Arxiv}. As such, there is an argument for adding the set of transversals of $E(H)$ to the input of the Maker-Breaker problem along with $H$, so that each player's winning sets are included. Since the number of transversals may be exponential in the size of $H$, it would be interesting to see whether or not the problem remains {\PSPACE}-hard in that case.

\section*{Acknowledgments}\strut
\indent My thanks go Valentin Gledel for suggesting that the Maker-Breaker result could be extended to the Maker-Maker convention. This research was partly supported by the ANR project P-GASE (ANR-21-CE48-0001-01).

\bibliography{biblio}

@article{Finn,
  title={{Solving Maker-Breaker games on 5-uniform hypergraphs is PSPACE-complete}},
  author={Koepke, F. O.},
  journal={Electronic Journal of Combinatorics},
  volume={32},
  number={4},
  year={2025},
  doi="10.37236/13920"
}

@unpublished{FlorianJonas_Arxiv,
      title={{A unified convention for achievement positional games (full version). Preprint.}}, 
      author={F. Galliot and J. S\'enizergues},
      year={2025},
      howpublished="",
      url={https://arxiv.org/abs/2503.18163}, 
}

@unpublished{FlorianJonas_EuroComb,
      title={{A unified convention for achievement positional games (extended abstract). EuroComb'25: 13th European Conference on Combinatorics, Graph Theory and Applications, 25-29 August 2025, Budapest, Hungary. Proceedings to appear}}, 
      author={F. Galliot and J. S\'enizergues},
      year={2025},
      howpublished=""
}

@Article{Schaefer,
  title =	 {On the complexity of some two-person perfect-information games},
  author =	 { T. J. Schaefer},
  doi =		 {10.1016/0022-0000(78)90045-4},
  journal =	 {J. Comput. Syst. Sci.},
  number =   2,
  volume =	 16,
  year =	 1978,
  pages =	 {185-225}
}

@article{Geography,
author = {Lichtenstein, D. and Sipser, M.},
title = {{Go is polynomial-space hard}},
year = {1980},
issue_date = {April 1980},
publisher = {Association for Computing Machinery},
address = {New York, NY, USA},
volume = {27},
number = {2},
doi = {10.1145/322186.322201},
journal = {J. ACM},
pages = {393-401},
}

@Inbook{Gardner_Shannon,
  title     = "The second scientific american book of mathematical puzzles and diversions",
  chapter="{``Recreational topology''}",
  author    = "M. Gardner",
  year      = 1961,
  publisher = "University Of Chicago Press",
  pages={84-87},
  isbn="978-0226282534",
  edition="2nd"
}

@Article{Lehman,
  title =	 {A solution of the {Shannon} switching game},
  author =	 {A. Lehman},
  journal =	 {J. Soc. Indust. Appl. Math.},
  volume =	 12,
  number =  4,
  year =	 1964,
  pages = {687-725},
  doi="10.1137/0112059"
}

@Inbook{Gardner_Hex,
  title     = {Hexaflexagons and other mathematical diversions},
  chapter="{``The game of Hex''}",
  author    = "M. Gardner",
  year      = 1959,
  publisher = "University Of Chicago Press",
  pages={38-40},
  isbn="978-0226282541",
  edition="2nd"
}

@Article{Hales1963,
  title =	 {Regularity and positional games},
  author =	 {A. W. Hales and R. I. Jewett},
  doi =		 {10.1007/978-0-8176-4842-8_23},
  journal =	 {Trans. Amer. Math. Soc.},
  volume =	 106,
  year =	 1963,
  pages =	 {222-229}
}

@unpublished{Florian_MB3,
      title={{Maker-Breaker is solved in polynomial time on hypergraphs of rank 3. Preprint}}, 
      author={F. Galliot and S. Gravier and I. Sivignon},
      year={2022},
      howpublished="",
      url={https://arxiv.org/abs/2209.12819}, 
}

@article{RW21,
author = {Rahman, M. L. and Watson, T.},
title = {{6-uniform Maker-Breaker game is PSPACE-complete}},
year = {2023},
volume = {43},
number = {3},
doi = {10.1007/s00493-023-00026-7},
journal = {Combinatorica},
pages = {595-612},
keywords = {Complexity, Game, Maker-Breaker, NL-hard, PSPACE-complete, Reduction, 68Q17}
}

@Article{erdos,
    author =    "P. Erdös and J. L. Selfridge",
    title=      "{On a combinatorial game}",
    journal=    "J. Combin. Theory Ser. A",
    volume=     "14",
    year=       1973,
    DOI=        "10.1016/0097-3165(73)90005-8",
}

@Article{CE78,
  title =	 {Biased positional games},
  author =	 {V. Chv\'atal and P. Erd\H{o}s},
  doi =		 {10.1016/S0167-5060(08)70335-2},
  journal =	 {Ann. Disc. Math.},
  volume =	 2,
  year =	 1978,
  pages =	 {221-229}
}

@Book{Bec08,
  title     = "Combinatorial Games: Tic-Tac-Toe Theory",
  author =   {J. Beck},
  year      = 2008,
  publisher = "Academic Press",
  address   = "Cambridge",
  isbn="978-0-521-46100-9",
  doi="10.1017/CBO9780511735202"
}

@InProceedings{UnorderedQBF1,
author="Ahlroth, L.
and Orponen, P.",
title="Unordered Constraint Satisfaction Games",
booktitle="Math. Found. Comput. Sci. 2012",
year="2012",
publisher="Springer Berlin Heidelberg",
address="Berlin, Heidelberg",
pages="64-75",
isbn="978-3-642-32589-2",
doi="10.1007/978-3-642-32589-2_9"
}

@article{UnorderedQBF2,
author = {Rahman, M. L. and Watson, T.},
title = {{Complexity of unordered CNF games}},
year = {2020},
issue_date = {September 2020},
publisher = {Association for Computing Machinery},
address = {New York, NY, USA},
volume = {12},
number = {3},
doi = {10.1145/3397478},
journal = {ACM Trans. Comput. Theory},
articleno = {18},
keywords = {CNF, PSPACE-complete, games, linear time}
}

@inproceedings{QBF,
    title={Word problems requiring exponential time (preliminary report)},
    author={Stockmeyer, L. J. and Meyer, A. R.},
    year = {1973},
    publisher = {Association for Computing Machinery},
    address = {New York, NY, USA},
    doi = {10.1145/800125.804029},
    booktitle = {Proceedings of the Fifth Annual ACM Symposium on Theory of Computing},
    pages = {1-9},
    location = {Austin, Texas, USA},
    series = {STOC '73}
}

@unpublished{Arthur,
    title = {{Solving the $P_5$-game on forests (in French). Unpublished}},
    author = {Duch\^ene, E. and Dumas, A. and Hilaire, M. and Parreau, A.},
    howpublished="",
    year="2024",
    url = "https://jga2024.sciencesconf.org/579456/document"
}

@InProceedings{FlorianJonas_MM4,
  author =	{Galliot, F. and S\'{e}nizergues, J.},
  title =	{{Maker-Maker games of rank 4 are PSPACE-complete}},
  booktitle =	{43rd International Symposium on Theoretical Aspects of Computer Science (STACS 2026)},
  pages =	{40:1-40:17},
  series =	{Leibniz International Proceedings in Informatics (LIPIcs)},
  year =	{2026},
  volume =	{364},
  publisher =	{Schloss Dagstuhl, Leibniz-Zentrum f{\"u}r Informatik},
  address =	{Dagstuhl, Germany},
  doi =		{10.4230/LIPIcs.STACS.2026.40},
  annote =	{Keywords: Game theory, Positional games, Combinatorial games, Complexity, Hypergraphs}
}

@unpublished{AvoiderEnforcer,
    title = {{The Avoider-Enforcer game on hypergraphs of rank 3. Preprint}},
    author = {F. Galliot and V. Gledel and A. Parreau},
    howpublished="",
    year="2025",
    url = "https://arxiv.org/abs/2503.21672"
}

@InProceedings{NacimValentin,
  author =	{Gledel, V. and Oijid, N.},
  title =	{{Avoidance games are PSPACE-complete}},
  booktitle =	{40th International Symposium on Theoretical Aspects of Computer Science (STACS 2023)},
  pages =	{34:1-34:19},
  series =	{Leibniz International Proceedings in Informatics (LIPIcs)},
  year =	{2023},
  volume =	{254},
  publisher =	{Schloss Dagstuhl, Leibniz-Zentrum f{\"u}r Informatik},
  address =	{Dagstuhl, Germany},
  doi =		{10.4230/LIPIcs.STACS.2023.34},
  annote =	{Keywords: Games, Avoider-Enforcer, Maker-Breaker, Complexity, Avoider-Avoider, PSPACE-complete}
}

@InProceedings{RW20,
author="Rahman, M. L.
and Watson, T.",
title="Tractable Unordered 3-{CNF} Games",
booktitle="LATIN 2020: Theoretical Informatics\textup{, proc. Latin American Symposium on Theoretical Informatics}",
series={Lecture Notes in Comput. Sci.},
volume="12118",
year="2020",
publisher="Springer International Publishing",
address="Cham",
pages="360-372",
doi="10.1007/978-3-030-61792-9_29"
}

@Article{Zetters,
    author =    "R. K. Guy and J. L. Selfridge",
    title=      "{Problem S10: solution by T. G. L. Zetters}",
    journal=    "The American Mathematical Monthly",
    volume=     "87",
    year=       1980,
    pages =     {575-576},
    DOI=        "10.2307/2321433",
}

@Article{Byskov,
  title =	 {{Maker-Maker and Maker-Breaker games are PSPACE-complete}},
  author =	 {J. M. Byskov},
  doi =		 {10.7146/brics.v11i14.21839},
  journal =	 {BRICS Report Series},
  volume =	 11,
  number = 14,
  year =	 2004
}

@book{Milos,
  title={{Positional Games}},
  author={Hefetz, D. and Krivelevich, M. and Stojakovi{\'c}, M. and Szab{\'o}, T.},
  volume={44},
  year={2014},
  publisher={Springer},
  doi="10.1007/978-3-0348-0825-5"
}

@article{Hgame,
title = {{Complexity of Maker–Breaker games on edge sets of graphs}},
journal = {Discrete Applied Mathematics},
volume = {361},
pages = {502-522},
year = {2025},
doi = {10.1016/j.dam.2024.11.012},
author = {E. Duch\^ene and V. Gledel and F. {Mc Inerney} and N. Nisse and N. Oijid and A. Parreau and M. Stojakovi{\'c}},
keywords = {Maker–Breaker games, -game, Perfect matching game, -hard, , Computational complexity},
}

@Article{Reisch,
  title =	 {{Hex ist PSPACE-vollst\H{a}ndig (in German)}},
  author =	 {S. Reisch},
  doi =		 {10.1007/BF00288964},
  journal =	 {Acta Informatica},
  volume =	 15,
  pages = {167-191},
  year =	 1981
}

@article{EvenTarjan,
title = "A combinatorial problem which is complete in polynomial space",
author = "S. Even and R. E. Tarjan",
year = "1976",
doi = "10.1145/321978.321989",
volume = "23",
pages = "710-719",
journal = "Journal of the ACM",
number = "4",
}

@article{domination1,
title = {{Maker–Breaker domination game}},
journal = {Discrete Mathematics},
volume = {343},
number = {9},
pages = {111955},
year = {2020},
doi = {10.1016/j.disc.2020.111955},
author = {E. Duch\^ene and V. Gledel and A. Parreau and G. Renault},
keywords = {Positional games, Maker–Breaker domination game, Domination game, Complexity, Tree, Cograph}
}

@article{domination2,
title = {{The Maker–Maker domination game in forests}},
journal = {Discrete Applied Mathematics},
volume = {348},
pages = {6-34},
year = {2024},
doi = {doi.org/10.1016/j.dam.2024.01.023},
author = {E. Duch\^ene and A. Dumas and N. Oijid and A. Parreau and E. R\'emila},
keywords = {Positional games, Maker–Breaker games, Maker–Maker games, PSPACE-completeness, Domination game, Dominating set, Graphs}
}

@unpublished{vertexHgame,
      title={{$H$-games played on vertex sets of random graphs. Preprint.}}, 
      author={G. Kronenberg and A. Mond and A. Naor},
      year={2019},
      howpublished="",
      url={https://arxiv.org/abs/1901.00351}, 
}

@InProceedings{boundeddegreeQBF,
author="Oijid, N.",
title="{Bounded degree QBF and positional games}",
booktitle="Algorithms and Complexity",
year="2025",
publisher="Springer Nature Switzerland",
address="Cham",
pages="121-135",
doi="10.1007/978-3-031-92935-9_8"
}

\end{document}